\documentclass[12pt]{amsart}
\usepackage{amssymb}
\usepackage{color}
\usepackage{amsmath,epic,curves,amscd}
\usepackage[english]{babel}
\usepackage{graphicx}
\usepackage{comment}
\usepackage{appendix}
\usepackage{mathdots}
\usepackage[all]{xy}
\usepackage{mathtools}
\usepackage{lscape}
\usepackage{stmaryrd}
\usepackage{mathabx}
\newtheorem{claim}{}[section]
\newtheorem{theorem}[claim]{Theorem}

\newtheorem{proposition}[claim]{Proposition}

\theoremstyle{remark}

\renewenvironment{proof}{\noindent{\it Proof. \hskip0pt}}
                      {$\square$\par\medskip}
\allowdisplaybreaks \textwidth 14.9 true cm \textheight 22.9 true cm
\begin{document}
\baselineskip 6.0 truemm
\parindent 1.5 true pc

\newcommand\lan{\langle}
\newcommand\ran{\rangle}
\newcommand\tr{{\text{\rm Tr}}\,}
\newcommand\ot{\otimes}
\newcommand\ttt{{\text{\sf t}}}
\newcommand\rank{\ {\text{\rm rank of}}\ }
\newcommand\choi{{\rm C}}
\newcommand\dual{\star}
\newcommand\flip{\star}
\newcommand\cp{{\mathcal C}{\mathcal P}}
\newcommand\ccp{{\mathcal C}{\mathcal C}{\mathcal P}}
\newcommand\pos{{\mathcal P}}
\newcommand\tcone{T}
\newcommand\mcone{K}
\newcommand\superpos{{\mathbb S\mathbb P}}
\newcommand\blockpos{{\mathcal B\mathcal P}}
\newcommand\jc{{\text{\rm JC}}}
\newcommand\dec{{\mathcal D}{\mathcal E}{\mathcal C}}
\newcommand\ppt{{\mathcal P}{\mathcal P}{\mathcal T}}
\newcommand\xxxx{\bigskip\par ================================}
\newcommand\join{\vee}
\newcommand\meet{\wedge}
\newcommand\ad{{\text{\rm Ad}}\,}
\newcommand\HS{{\text{\rm HS}}}
\newcommand\sr{{\text{\rm SR}}\,}
\newcommand\e{\varepsilon}
\newcommand\re{{\text{\rm Re}}\,}
\newcommand\diag{{\text{\rm Diag}\,}}
\newcommand\id{{\text{\rm id}}}
\newcommand\SN{{\text{\rm SN}}\,}
\newcommand\inte{{\text{\rm int}}\,}
\newcommand\im{{\text{\rm Im}}\,}
\newcommand\SR{{\text{\rm SR}}\,}
\newcommand\Aff{{\text{\rm Aff}}\,}

\title{Local geometry for Schmidt number witnesses}

\author{Young-Hoon Kiem and Seung-Hyeok Kye}
\address{School of Mathematics, Korea Institute for Advanced Study, 85 Hoegiro, Dongdaemun-gu, Seoul 02455, Korea}
\email{kiem at kias.re.kr}
\address{Seung-Hyeok Kye, Department of Mathematics and Institute of Mathematics, Seoul National University, Seoul 08826, Korea}
\email{kye at snu.ac.kr}

\keywords{principal supporting hyperplanes, $k$-blockpositive matrices, Schmidt number witnesses, Schmidt rank, }
\subjclass{81P42, 46L60, 15A30, 16Xxx}
\thanks{Partially supported by NRF-RS-2025-23323242, Korea.}

\begin{abstract}
Suppose that $F_E$ is the face of the convex set of all $m\ot n$ bi-partite
states which consists of states with ranges contained in a subspace
$E$. For generic subspaces $E$ with a specific dimension, we use the
result in  [Phys. Rev. A {\bf 112} (2025), 032426] to see that there
exists a number $\kappa$, depending only on the dimension of $E$, such
that there exist Schmidt number $\ell$ witnesses outside of
$F_{E^\perp}$ if and only if $\ell\le\kappa$. In this
generic case, we show in this paper that there exist Schmidt number
$\ell$ witnesses for $\ell>\kappa$ around the projection states
located at the center of $F_{E^\perp}$.
\end{abstract}
\maketitle

%%%%%%%%%%%%%%%%%%%%%%%%%%%%%%%%%%%%%%%%%%%%%%%%%%%%%%%%%%%%%%%%%%%%%%%%%%%%%%%%%%%%%%%%%%%%%%%%%%%%%
%%%%%%%%%%%%%%%%%%%%%%%%%%%%%%%%%%%%%%%%%%%%%%%%%%%%%%%%%%%%%%%%%%%%%%%%%%%%%%%%%%%%%%%%%%%%%%%%%%%%%
%%%%%%%%%%%%%%%%%%%%%%%%%%%%%%%%%%%%%%%%%%%%%%%%%%%%%%%%%%%%%%%%%%%%%%%%%%%%%%%%%%%%%%%%%%%%%%%%%%%%%
%%%%%%%%%%%%%%%%%%%%%%%%%%%%%%%%%%%%%%%%%%%%%%%%%%%%%%%%%%%%%%%%%%%%%%%%%%%%%%%%%%%%%%%%%%%%%%%%%%%%%
%%%%%%%%%%%%%%%%%%%%%%%%%%%%%%%%%%%%%%%%%%%%%%%%%%%%%%%%%%%%%%%%%%%%%%%%%%%%%%%%%%%%%%%%%%%%%%%%%%%%%
%%%%%%%%%%%%%%%%%%%%%%%%%%%%%%%%%%%%%%%%%%%%%%%%%%%%%%%%%%%%%%%%%%%%%%%%%%%%%%%%%%%%%%%%%%%%%%%%%%%%%
%%%%%%%%%%%%%%%%%%%%%%%%%%%%%%%%%%%%%%%%%%%%%%%%%%%%%%%%%%%%%%%%%%%%%%%%%%%%%%%%%%%%%%%%%%%%%%%%%%%%%
%%%%%%%%%%%%%%%%%%%%%%%%%%%%%%%%%%%%%%%%%%%%%%%%%%%%%%%%%%%%%%%%%%%%%%%%%%%%%%%%%%%%%%%%%%%%%%%%%%%%%
%%%%%%%%%%%%%%%%%%%%%%%%%%%%%%%%%%%%%%%%%%%%%%%%%%%%%%%%%%%%%%%%%%%%%%%%%%%%%%%%%%%%%%%%%%%%%%%%%%%%%
\section{Introduction}

The Schmidt number \cite{terhal-schmidt} of a bi-partite quantum state is considered as an
important tool to measure the degree of entanglement; the more
entangled a state is, the higher Schmidt number it has, and it is
separable if and only if its Schmidt number is just one. It is a
very difficult problem to determine the Schmidt number of a state,
and we need  Schmidt number $k$ witnesses \cite{SBL_2001} for this purpose,
which lie on the set difference $\blockpos_{k-1}\setminus\blockpos_k$,
where $\blockpos_k$ denotes the convex set of all $k$-blockpositive matrices
\cite{{jam_72},{sko-thesis}} with trace one.
Recall that a bi-partite Hermitian matrix
is $k$-blockpositive if and only if it is the Choi matrix \cite{choi75-10} of a
$k$-positive map \cite{stine} on matrix algebras.

We consider the convex set $\mathcal D$ of all bi-partite states, or
equivalently density matrices in $M_m\ot M_n$, on the affine space
$\mathcal H$ in the real vector space $(M_m\ot M_n)^{\rm h}$ of Hermitian matrices in $M_m\ot M_n$ with trace
one. Then it is clear that Schmidt number witnesses are located
outside of $\mathcal D$ on the affine space $\mathcal H$.
We recall that the boundary of a convex set is completely partitioned
by the relative interiors of nontrivial faces, and so outside of the convex set is also partitioned
by nontrivial faces \cite{han_kye_2025_a}. For a given subspace $E$
in $\mathbb C^m\ot\mathbb C^n$, it is well known that the set
$F_E$ of all states whose ranges are contained in $E$ is a face of the convex set $\mathcal D$,
and every face of $\mathcal D$ arises in this way.

It was shown in \cite{han_kye_2025_a} that if there exist Schmidt
number $k$ witnesses outside of the face $F_{E^\perp}$ then there
exist Schmidt number $\ell$ witnesses outside of $F_{E^\perp}$ for
$\ell=2,\dots,k$; the number $k$ is the largest such a number if and
only if  the minimum of Schmidt ranks of vectors in $E$ is precisely
$k$, for $k=2,3,\dots, m\meet n$, where $m\meet n=\min\{m,n\}$. Such
a subspace is called {\sl exactly $(k-1)$-entangled} in
\cite{han_kye_2025_a},  which classifies completely entangled spaces
\cite{Parthasarathy_2004} with $k=2,\dots, m\meet n$. See
\cite{lovitz_jognson} for the notion of $k$-entangled subspaces.
Especially, there exists no entanglement witness outside of
$F_{E^\perp}$ at all if and only if the orthogonal complement
$E$ has a product vector. See \cite{kye-cambridge} for the related result for positive maps,
when $E$ is one dimensional. On the other hand, there exist
Schmidt number $\ell$ witnesses outside of $F_{E^\perp}$ for every
$\ell=2,3,\dots, m\meet n$ if and only if every vector in $E$
has the maximal Schmidt rank. If $m=n$, this happens only when $E$
is a one dimensional space spanned by a vector with the maximal
Schmidt rank. Choi matrices of Tomiyama maps
\cite{{tom_85}} give rise to typical examples for such a case, which
distinguish Schmidt numbers of isotropic states
\cite{terhal-schmidt}. See {\sc Figure 1}.

\begin{figure}
%\begin{center}
\setlength{\unitlength}{0.6 truecm}
\begin{picture}(17,4.9)
\thinlines

\dottedline{0.05}(0,2)(17,2)

\put(1,2){\circle*{0.2}}%A
\qbezier(1.25,1.0)(0.75,2)(1.25,3)
\put(1.3,0.4){$\blockpos_1$}

\put(3,2){\circle*{0.2}}%A
\qbezier(3.25,1.0)(2.75,2)(3.25,3)
\put(3.3,0.4){$\blockpos_2$}

\put(1.8,2.8){$\downarrow$}
\put(1,3.5){SN 2 W}
\qbezier(1,2)(2,3)(3,2)

\put(6.6,2){\circle*{0.2}}%A
\qbezier(6.85,1.0)(6.35,2)(6.85,3)
\put(6.8,0.4){$\blockpos_{\kappa-1}$}

\put(9,2){\circle*{0.2}}%A
\qbezier(9.25,1.0)(8.75,2)(9.25,3)
\put(9.2,0.4){$\blockpos_\kappa$}

\put(7.6,2.8){$\downarrow$}
\put(7,3.5){SN $\kappa$ W}
\qbezier(6.6,2)(7.8,3)(9,2)

\qbezier(9,2)(10.5,1)(12,1)
\put(12.2,0.8){$\mathcal D$}
\qbezier(13,1)(14.5,1)(16,2)
\put(16,2){\circle*{0.2}}%A
\put(9,2.2){$F_{E^\perp}$}
\qbezier(15.75,1)(16.25,2)(15.75,3)
\put(16,2.2){$F_E$}

\put(12,2){\circle*{0.2}}%A
\put(12,2.3){$\varrho_*$}

\put(4.5,2){\circle*{0.2}}%A
\put(5,2){\circle*{0.2}}%A
\put(5.5,2){\circle*{0.2}}%A

\put(4.7,3.8){\circle*{0.2}}%A
\put(5.2,3.8){\circle*{0.2}}%A
\put(5.7,3.8){\circle*{0.2}}%A

\end{picture}
%\end{center}
\caption{This picture shows how Schmidt number $\ell$ witnesses for $\ell=2,\dots,\kappa$
are located on the line through the maximally mixed state $\varrho_*$
and projection states on the faces $F_E$ and $F_{E^\perp}$.
This picture occurs with a specific $\kappa$ if and only if the minimum of Schmidt ranks of vectors in $E$
is precisely $\kappa$.}
\end{figure}

In this paper, we pay attention to the case when there
is no Schmidt number $k+1$ witness outside of $F_{E^\perp}$. In this
case, the face $F_{E^\perp}$ of $\mathcal D$ is contained in the
boundary of the convex set $\blockpos_{k}$, and so we ask what
happens around the projection state $\varrho_{E^\perp}=\frac 1{\dim
{E^\perp}}P_{E^\perp}$ with projection onto ${E^\perp}$, which is
located at the  center of the face $F_{E^\perp}$. For
this purpose, we consider the supporting hyperplanes of
$\blockpos_k$ through $\varrho_{E^\perp}$ as in
\cite{han_kye_hyper}, and introduce the notion of {\sl principal}
supporting hyperplanes among them which determine the local shape of
$\blockpos_k$ around $\varrho_{E^\perp}$. It turns out that they
correspond to the extreme rays of the dual face of $F_{E^\perp}$
which consist of all (unnormalized) states of Schmidt number $\le k$
with ranges in $E$. Therefore, we conclude that the
local shape of $\blockpos_k$ around $\varrho_{E^\perp}$ is determined by
vectors in $E$ with Schmidt rank $\le k$. In this context, we
consider
$$
\begin{aligned}
S(E,k)&:=\{|\xi\ran\in E: \SR |\xi\ran \le k\},\\
S^\prime(E,k)&:=\{|\xi\ran\in E: \SR |\xi\ran = k\},
\end{aligned}
$$
for a subspace $E$ of $\mathbb C^m\ot\mathbb C^n$ and $k=1,2,\dots,m\meet n$, where $\SR|\xi\ran$ denotes the
Schmidt rank of $|\xi\ran\in\mathbb C^m\ot\mathbb C^n$.

For a given natural number $d=1,2,\dots,mn$, there exists a unique natural number, denoted by $\kappa_d$, satisfying
\begin{equation}\label{def_kappa}
(m-\kappa_d)(n-\kappa_d)<d\le (m-(\kappa_d-1))(n-(\kappa_d-1)).
\end{equation}
For a generic subspace $E$ of dimension $d$, it is well known
%{\color{red}\cite{{eisenbud},{Ilic_Landsberg},{CMW08}}}
 that the minimum of Schmidt ranks of vectors in $E$ is given by $\kappa_d$. See \cite{CMW08}.
In this paper, for generic $d$ dimensional subspaces $E$ of $\mathbb C^m\ot\mathbb C^n$
we show that $S^\prime(E,k)$ is nonempty for $k\ge \kappa_d$.
In fact, we find the cardinality and dimension of $S(E,k)$ for $k\ge \kappa_d$ for generic cases,
by standard techniques from algebraic geometry.
With these results, we conclude that for generic subspaces $E$ with $\dim E=d$,
there exist Schmidt number $\ell$ witnesses outside of the face $F_{E^\perp}$ for $\ell=2,\dots, \kappa_d$,
and around the projection state $\varrho_{E^\perp}$ for $\ell=\kappa_d+1,\dots, m\meet n$.

In the next section, we define the notion of {\sl principality} of supporting hyperplanes of a convex set
which is essential to determine the local shape of the convex set and show that principal supporting hyperplanes
correspond to extreme rays of the dual convex cones. In Section 3, we apply the result to the convex set $\blockpos_k$ of
$k$-blockpositive matrices and get the results mentioned above. After we examine a couple of examples in Section 4,
we conclude the paper with a conjecture in the last section.

The authors are grateful to Kyung Hoon Han for fruitful discussions on the topics.

%%%%%%%%%%%%%%%%%%%%%%%%%%%%%%%%%%%%%%%%%%%%%%%%%%
%%%%%%%%%%%%%%%%%%%%%%%%%%%%%%%%%%%%%%%%%%%%%%%%%%
%%%%%%%%%%%%%%%%%%%%%%%%%%%%%%%%%%%%%%%%%%%%%%%%%%
%%%%%%%%%%%%%%%%%%%%%%%%%%%%%%%%%%%%%%%%%%%%%%%%%%
%%%%%%%%%%%%%%%%%%%%%%%%%%%%%%%%%%%%%%%%%%%%%%%%%%
\section {Principal supporting hyperplanes}
%%%%%%%%%%%%%%%%%%%%%%%%%%%%%%%%%%%%%%%%%%%%%%%%%%

Suppose that $C$ is a closed convex set in a finite dimensional affine space with
nonempty interior. We recall that a hyperplane $H$ in the affine
space is a {\sl supporting hyperplane} of $C$ when $C$ is located in
one of closed half-spaces, say $H^+$, determined by $H$, and has
nonempty intersection with $C$.
It is well known that for every boundary point $x$ of $C$ there
exists a supporting hyperplane of $C$ through the point $x$.
When $H$ is a supporting hyperplane of a closed convex set $C$, it is clear that $H\cap C$ is a face of $C$.
If three supporting hyperplanes $H_0,H_1$ and $H_2$ of $C$ satisfy
$C\subset H_1^+\cap H_2^+\subset H_0^+$ then  $H_0$ does not
contribute to determine the local shape of $C$ around $C\cap H_0$.
We call a supporting hyperplane $H$ of $C$ {\sl principal} when $H$
satisfies the condition; if two supporting hyperplanes
$H_1$ and $H_2$ satisfy $C\subset H_1^+\cap H_2^+\subset H^+$ then
$H=H_1$ or $H=H_2$.
See {\sc Figure 2}.

\begin{figure}
\begin{center}
\setlength{\unitlength}{1 truecm}
\begin{picture}(5,4)

\put(3,3){\circle*{0.1}}
\put(2.9,3.2){$x_0$}
\put(2.7,1.8){$C$}
\thinlines
\qbezier(1,1)(1,2.5)(3,3)
\drawline(1,1)(5,1)
\thicklines
\drawline(-1,2)(3,3)
\drawline(3,3)(6.2,3.8)
\drawline(2,4)(3,3)
\drawline(3,3)(5.5,0.5)
\drawline(3,3)(0,3.5)
\drawline(3,3)(6,2.5)

\put(6.25,3.7){$H_1$}
\put(6.1,2.3){$H_0$}
\put(5.6,0.3){$H_2$}

\end{picture}
\end{center}
\caption{
This picture shows three supporting hyperplanes $H_0$, $H_1$ and $H_2$ of the convex set $C$
through the boundary point $x_0$. Two supporting hyperplanes
$H_1$ and $H_2$ are principal, and determine the local shape of $C$ approximately around $x_0$.
The supporting hyperplane $H_0$, which is not principal,
has no contribution to the local shape of $C$.
}
\end{figure}

Suppose that $V$ and $W$ are finite dimensional real vector spaces with a
non-degenerate bilinear pairing $\lan\ , \ \ran$ on $V\times W$. For an
arbitrary nonempty subset $C$ of $V$, we define
$$
C^\circ=\{y\in W: \lan x,y\ran\ge 0\ {\text{\rm for every}}\ x\in
C\}.
$$
Then $C^\circ$ is a closed convex cone in $W$, that is, $C^\circ$ is
a closed set in $W$ which is closed under addition and nonnegative scalar multiplications. For a
subset $D$ of $W$, the convex cone $D^\circ$ in $V$ is also defined
in the same way. It is well known that $C^{\circ\circ}$ is the
smallest closed convex cone containing $C$, and so a convex cone $C$
is closed if and only if $C=C^{\circ\circ}$.

From now on, we suppose that $C$ is a closed convex cone in $V$ with nonempty interior. Noting that every
supporting hyperplane of a closed convex cone should be through
zero, we define
$$
H_x=\{y\in W: \lan x,y\ran=0\},\qquad
H_x^+=\{y\in W:\lan x,y\ran\ge 0\},
$$
for $x\in V$.
We note that $C^\circ\subset H_x^+$ if and only if $x\in C^{\circ\circ}$ if and only if $x\in C$,
and so every supporting hyperplane of $C^\circ$ is of the form $H_x$ for $x\in C$.
We also have $C^\circ=\bigcap_{x\in C}H_x^+$ by definition.
We recall \cite{eom-kye} that $x\in C$ is an interior point of $C$ if and only if
$\lan x,y\ran>0$ for every nonzero $y\in C^\circ$ if and only if $H_x\cap C^\circ=\{0\}$.
Therefore, $x\in C$ gives rise to a {\sl nontrivial} supporting hyperplane of $C^\circ$
satisfying the condition  $H_x\cap C^\circ\supsetneqq \{0\}$
if and only if $x$ is a boundary point of $C$.

Recall that nonzero $x\in C$ generates an extremal ray of $C$ if and only if
$x=x_1+x_2$ with $x_1,x_2\in C$ implies that both $x_1$ and $x_2$ are nonnegative scalar multiples of $x$.
If the convex cone $C$ satisfies $C\cap (-C)=\{0\}$, then it is also equivalent to the condition
that $x=x_1+x_2$ with $x_1,x_2\in C$ implies that either $x_1$ or $x_2$ is a positive scalar multiple of $x$.

\begin{proposition}\label{pro_prinSH}
Suppose that $C$ is a closed convex cone in a finite dimensional real vector space $V$ satisfying $C\cap(-C)=\{0\}$.
For a given nonzero $x\in C$, the supporting hyperplane $H_x$ of
$C^\circ$ is principal if and only if $x$ generates an extremal ray
of $C$.
\end{proposition}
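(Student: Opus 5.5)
The plan is to translate everything into statements about the elements $x_1,x_2\in C$ that define the hyperplanes, and then apply the bipolar identity $D^{\circ\circ}=$ closed convex cone generated by $D$. As noted above, every supporting hyperplane of $C^\circ$ has the form $H_y$ with $y\in C$, and $y\neq 0$ since $H_y$ must be a hyperplane. So the principality of $H_x$ says: whenever $x_1,x_2\in C\setminus\{0\}$ satisfy $H_{x_1}^+\cap H_{x_2}^+\subset H_x^+$, we have $H_x=H_{x_1}$ or $H_x=H_{x_2}$. The inclusion $C^\circ\subset H_{x_i}^+$ holds automatically because $x_i\in C$.

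I will first prove two auxiliary facts.

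Fact (a): $H_{x_1}^+\cap H_{x_2}^+\subset H_x^+$ if and only if $x=ax_1+bx_2$ for some $a,b\ge 0$. Indeed, $H_{x_1}^+\cap H_{x_2}^+=\{x_1,x_2\}^\circ$. The inclusion therefore means $x\in\{x_1,x_2\}^{\circ\circ}$, and this is the closed convex cone generated by $x_1,x_2$. That cone is finitely generated, hence already closed, so it equals $\{ax_1+bx_2:a,b\ge0\}$.

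Fact (b): for nonzero $x,y\in C$, $H_x=H_y$ if and only if $y$ is a positive multiple of $x$. Equal kernels of the nondegenerate functionals $\lan x,\cdot\ran$ and $\lan y,\cdot\ran$ force $y=\lambda x$ with $\lambda\ne0$ real. If $\lambda<0$, then $y\in C\cap(-C)=\{0\}$, a contradiction.

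For the direction \emph{extremal $\Rightarrow$ principal}, suppose $x$ generates an extremal ray and $H_{x_1}^+\cap H_{x_2}^+\subset H_x^+$ with $x_1,x_2\in C\setminus\{0\}$. By (a), $x=ax_1+bx_2$ with $ax_1,bx_2\in C$. By the pointed form of extremality, one summand, say $ax_1$, is a positive multiple of $x$. Then $a>0$, so $x_1$ is a positive multiple of $x$, and (b) gives $H_{x_1}=H_x$.

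For the direction \emph{principal $\Rightarrow$ extremal}, suppose $x$ is not extremal. Then $x=x_1+x_2$ with $x_1,x_2\in C$, and by the pointed characterization neither summand is a positive multiple of $x$. In particular both are nonzero, since otherwise the other summand would equal $x$. The inclusion $H_{x_1}^+\cap H_{x_2}^+\subset H_x^+$ is immediate from linearity. If $H_x=H_{x_1}$, then (b) would make $x_1$ a positive multiple of $x$, which is excluded; the same holds for $x_2$. So $H_x$ is not principal.

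The only step needing care is (a). There one must remember that the pairing is between $V$ and $W$, so $\{x_1,x_2\}^{\circ\circ}$ is computed back in $V$. One must also note that a two-generated cone is closed, which avoids taking a closure.
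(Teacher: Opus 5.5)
Your proof is correct and follows essentially the same route as the paper. The paper's identity $(H_{x_1}^+\cap H_{x_2}^+)^\circ=(H_{x_1}^+)^\circ+(H_{x_2}^+)^\circ$ is your Fact (a), and the converse direction uses the same decomposition $x=x_1+x_2$. You are more careful than the paper about details it leaves implicit: zero summands, closedness of the two-generated cone, and using $C\cap(-C)=\{0\}$ to get a \emph{positive} multiple from $H_x=H_{x_i}$.
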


\begin{proof}
Suppose that $H_x$ is a principal supporting hyperplane.
In order to show that $x$ generates an extreme ray, we suppose that
$x=x_1+x_2$ with $x_1,x_2\in C$. Then
$H_{x_1}$ and $H_{x_2}$ are supporting
hyperplanes, and we have $C^\circ\subset H_{x_1}^+\cap H_{x_2}^+\subset H_x^+$, which
implies $H_x=H_{x_1}$ or $H_x=H_{x_2}$. Then we see that $x_1$ or $x_2$ is a positive
scalar multiple of $x$.

For the converse, suppose that $x$ generates an extremal ray of $C$.
We also suppose that $H_{x_1}^+$ and $H_{x_2}^+$ are
hyperplanes satisfying $C^\circ\subset
H_{x_1}^+\cap H_{x_2}^+\subset H_x^+$.
%Then $\{x_1,x_2\}$ is linearly independent.
We note that $(H_x^+)^\circ$ is the
ray $\{\lambda x: \lambda\ge 0\}$ generated by $x$. From this, it is
easily seen that $(H_{x_1}^+\cap H_{x_2}^+)^\circ=(H_{x_1}^+)^\circ
+  (H_{x_2}^+)^\circ$, and so we have $(H_x^+)^\circ\subset
(H_{x_1}^+)^\circ + (H_{x_2}^+)^\circ$. Therefore, we see that
$x=\lambda_1x_1+\lambda_2x_2$ with $\lambda_i\ge 0$ for $i=1,2$.
Extremity of $x$ implies that $x_1$ or $x_2$ is a positive scalar multiplication of $x$,
from which we have $H_x=H_{x_2}$ or $H_x=H_{x_1}$. Therefore, we conclude that
$H_x$ is a principal supporting hyperplane.
\end{proof}

By the duality $C=C^{\circ\circ}$, we may change the roles of $C$ and $C^\circ$ in Proposition \ref{pro_prinSH}.
Keeping the hyperplane $\mathcal H$ in the space $(M_m\ot M_n)^{\rm h}$ given by the trace in mind,
we consider a hyperplane in $V$ which is not through zero, which is of the form
$$
L_y=\{x\in V:\lan x,y\ran=1\},
$$
for a nonzero $y\in W$. If $y\in C^\circ$ then it is clear that $C\cap L_y$ is nonempty.
We recall again that $y$ is an interior point of $C^\circ$ if and only if
$\lan x,y\ran>0$ for every nonzero $x\in C$. This is equivalent to the following geometric condition;
%We turn our attention on the convex set $C\cap L_y$ on the affine space $L_y$.
%If a convex cone $C$ has nonempty interior in $V$ then $C\cap L_y$ has also nonempty interior in $L_y$.
%When $H_z$ is a supporting hyperplane of $C$ with $z\in W$, it is natural to consider the hyperplane
%$H_z\cap L_y$ in $L_y$. It is clear that $C\subset H_z^+$ implies $C\cap L_y\subset H_z^+\cap L_y$.
%We suppose that the hyperplane $L_y$ has the following property;
\begin{enumerate}
\item[$(C_1)$]
$L_y$ intersects every ray in $C$.
%\item[$(C_2)$]
%If $x\in C$ and $\lan x,y\ran=0$ then $x=0$.
\end{enumerate}

\begin{proposition}\label{pro_prinSH1}
Suppose that $C$ is a closed convex cone in $V$ with nonempty interior
and $y\in W$ is an interior point of $C^\circ$. Then
\begin{equation}\label{corres}
H_z\mapsto H_z\cap L_y,\qquad z\in\partial C^\circ
\end{equation}
is a one-to-one
correspondence between nontrivial supporting hyperplanes of $C$ and supporting hyperplanes of $C\cap L_y$.
%${\mathcal{SH}}(C)$ onto ${\mathcal{SH}}(C\cap L_y)$.
Furthermore, $H_z$ is principal if and only if $H_z\cap L_y$ is principal.
\end{proposition}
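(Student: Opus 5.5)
The plan is to translate everything about the affine slice $C\cap L_y$ back into statements about the cone $C$, and then into statements about vectors of $W$. Throughout I use condition $(C_1)$, which holds because $y$ is an interior point of $C^\circ$. Concretely, every nonzero $x\in C$ satisfies $\lan x,y\ran>0$, so it is a positive multiple of a point of $C\cap L_y$.

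\emph{Step 1: parametrising the affine hyperplanes of $L_y$.} Every affine hyperplane of $L_y$ has the form $\{x\in L_y:\lan x,w\ran=c\}$. On $L_y$ this coincides with $\{x\in L_y:\lan x,w-cy\ran=0\}$, so it equals $H_z\cap L_y$ with $z=w-cy$, and $z$ is not a multiple of $y$. Conversely, the linear span of $H_z\cap L_y$ is $H_z$, since this affine set does not pass through $0$. Hence $H_z\cap L_y$ determines $z$ up to a nonzero scalar, and fixing the side $H_z^+\cap L_y$ determines $z$ up to a positive scalar.

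\emph{Step 2: the supporting condition.} By $(C_1)$ we have $C\cap L_y\subset H_z^+$ if and only if $C\subset H_z^+$, i.e. $z\in C^\circ$. Moreover $H_z\cap C\cap L_y\neq\emptyset$ if and only if $H_z\cap C\supsetneqq\{0\}$. By the remark before Proposition \ref{pro_prinSH} (with the roles of $C$ and $C^\circ$ exchanged), this holds if and only if $z\in\partial C^\circ$. Such $z$ is automatically not a multiple of $y$, since $y$ is interior. Together with Step 1, this shows that (\ref{corres}) is a well-defined bijection between nontrivial supporting hyperplanes of $C$ and supporting hyperplanes of $C\cap L_y$.

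\emph{Step 3: principality.} By Farkas' lemma, as in the proof of Proposition \ref{pro_prinSH}, the cone inclusion $H_{z_1}^+\cap H_{z_2}^+\subset H_z^+$ is equivalent to $z\in\{\lambda_1z_1+\lambda_2z_2:\lambda_i\ge0\}$. The affine inclusion is different. Note that $(H_{z_1}^+\cap H_{z_2}^+)\cap L_y\subset H_z^+\cap L_y$ is equivalent, after scaling points with $\lan x,y\ran>0$ and taking closure, to $H_{z_1}^+\cap H_{z_2}^+\cap H_y^+\subset H_z^+$. That in turn is equivalent to $z=a z_1+b z_2+c y$ with $a,b,c\ge0$. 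This mismatch is the main obstacle, and I would remove it as follows. Because $z\in\partial C^\circ$, pick nonzero $x_0\in C\cap H_z$ and pair it with $z$:
\[
0=a\lan x_0,z_1\ran+b\lan x_0,z_2\ran+c\lan x_0,y\ran .
\]
All three terms are $\ge0$ and $\lan x_0,y\ran>0$, so $c=0$. Thus the affine and conic inclusions coincide for such triples.

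The same pairing also shows the following. Whenever $a>0$ we get $x_0\in H_{z_1}$, so $H_{z_1}$ is nontrivial; in the cone picture any trivial $H_{z_i}$ appearing in a decomposition has zero coefficient. Consequently the conditions ``$H_z=H_{z_1}$ or $H_z=H_{z_2}$'' are matched exactly under (\ref{corres}), using the injectivity from Step 1. This gives that $H_z$ is principal if and only if $H_z\cap L_y$ is principal.
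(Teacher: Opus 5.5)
Your proposal is correct. Steps 1 and 2 are essentially the paper's proof of the correspondence:
\begin{itemize}
\item injectivity, because the linear span of the codimension-two affine set $H_z\cap L_y$ (which misses $0$) is $H_z$;
\item surjectivity, by writing a supporting hyperplane of $C\cap L_y$ as $H_z\cap L_y$, fixing the sign of $z$, and using $(C_1)$ to get $z\in C^\circ$.
\end{itemize}

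Where you go beyond the paper is the principality claim, which the paper dismisses with ``the last claim can be seen easily.'' Your Step 3 correctly isolates the two reasons it is not literally automatic. First, an inclusion of half-spaces inside $L_y$ only gives $z=az_1+bz_2+cy$ with $a,b,c\ge 0$. Second, the definition of principality for the cone $C$ lets $H_1,H_2$ range also over trivial supporting hyperplanes $H_w$, with $w$ an interior point of $C^\circ$, and these have no counterpart in the slice. Pairing with a nonzero contact point $x_0\in C\cap H_z$ disposes of both, as you say.

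Two things are left implicit and deserve a line each:
\begin{itemize}
\item The closure step uses that $H_{z_1}^+\cap H_{z_2}^+\cap\{x:\langle x,y\rangle>0\}$ is nonempty. This holds because it contains $C\setminus\{0\}$.
\item When a trivial $H_{z_1}$ receives coefficient $a=0$, you should note that then $z=bz_2$ with $b>0$. Hence $H_z=H_{z_2}$, so such pairs never obstruct principality.
\end{itemize}

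If you want to avoid the three-generator Farkas step altogether, the contact point gives the affine-to-conic implication directly. Take $x\in H_{z_1}^+\cap H_{z_2}^+$ and $x_0\in C\cap H_z\cap L_y$. The vector $x+tx_0$ lies in $H_{z_1}^+\cap H_{z_2}^+$ and has $\langle x+tx_0,y\rangle>0$ for large $t$. Rescaling it into $L_y$ then gives $\langle x,z\rangle=\langle x+tx_0,z\rangle\ge 0$.
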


\begin{proof}
Suppose that $H_z$ is a nontrivial supporting hyperplane of $C$, and take a nonzero $x\in C\cap H_z$.
Then the ray generated by $x$ meets $L_y$ and so we see that $(C\cap L_y)\cap (H_z\cap L_y)$ is nonempty.
Since $C\subset H_z^+$ implies $C\cap L_y\subset H_z^+\cap L_y$, we see that $H_z\cap L_y$ is a supporting hyperplane of
$C\cap L_y$.

We proceed to show that $H_z$ coincides with the affine space $\Aff (H_z\cap L_y,0)$ in $V$ generated by $H_z\cap L_y$ and zero.
We note that $H_z\cap L_y$ is an affine space of codimension two in $V$ and $0\notin H_z\cap L_y$, and so
$\Aff(H_z\cap L_y,0)$ is a subspace of codimension one. We also note that $H_z$ is also a subspace of codimension one.
By the relation $H_z\supset \Aff(H_z\cap L_y,0)$, we conclude $H_z=\Aff(H_z\cap L_y,0)$.
Therefore, the correspondence (\ref{corres}) is one-to-one.

In order to show that the correspondence (\ref{corres}) is surjective, suppose that $K$ is
a supporting hyperplane of $C\cap L_y$. Then $K$ is an affine subspace of $V$ of codimension two, and it is of the form
$$
K=\{x\in V: \lan x,y\ran=1,\ \lan x,z\ran=0\}\subset L_y,
$$
for $z\in W$, and so $H_z\cap L_y=K$. We choose $z$ so that $K$ is a supporting hyperplane of $C\cap L_y$, that is,
$$
C\cap L_y\subset \{x\in V: \lan x,y\ran=1,\ \lan x,z\ran\ge 0\}.
$$
If $x\in C$ is nonzero then we have $\alpha x\in C\cap L_y$ for some $\alpha>0$ by the property $(C_1)$, and so
$\lan \alpha x, z\ran\ge 0$. This implies $\lan x, z\ran\ge 0$, and so we have $z\in C^\circ$,
and $H_z$ is a nontrivial supporting hyperplane of $C^\circ$.
The last claim can be seen easily.
\end{proof}

Suppose that closed convex sets $C$ and $D$ with the inclusion $C\supset D$ share a boundary point $x_0$. We say that
$C\supset D$ is {\sl locally distinguishable around $x_0$} if the set $N\cap (C\setminus D)$ is nonempty for every neighborhood $N$ of $x_0$.
If $C\supset D$ then every supporting hyperplane of $C$ through $x_0$ is also a supporting hyperplane of $D$ through $x_0$.
If there exists a supporting hyperplane $H$ of $D$ through $x_0$ which is not a supporting hyperplane of $C$,
it is clear that $C\supset D$ is locally distinguishable around $D$. See {\sc Figure 3}. In fact, we take
$x\in C$ satisfying $x\notin H^+$, then the open line segment from $x_0$ to $x$
lies in $C\setminus D$.

\begin{figure}
\begin{center}
\setlength{\unitlength}{1 truecm}
\begin{picture}(5,3)
\put(3,2){\circle*{0.1}}
\put(2.9,2.2){$x_0$}
\put(1.1,0.1){$C$}
\put(2.1,0.1){$D$}
\thinlines
\qbezier(1,0)(1,1.5)(3,2)
\qbezier(2,0)(2.2,1.5)(3,2)
\drawline(1,0)(5,0)
\drawline(3,2)(5,0)
\thicklines
\drawline(3,2)(0.75,0.5)
\drawline(3,2)(3.6,2.4)
\end{picture}
\end{center}
\caption{
When two convex sets $C$ and $D$ with $C\supsetneqq D$ have a common boundary point $x_0$
and there exists a supporting hyperplane of $D$ through $x_0$ which is not a supporting hyperplane of $C$,
then the inclusion $C\supsetneqq D$ is locally distinguishable around $x_0$.
}
\end{figure}

In the next section, we consider the case when $\blockpos_{k-1}$ and $\blockpos_{k}$ have a common boundary point $\varrho_{E^\perp}$.
We will try to look for principal supporting hyperplane of $\blockpos_{k}$ which is not a supporting hyperplane of
$\blockpos_{k-1}$ to check if $\blockpos_{k-1}\supsetneqq\blockpos_{k}$ is locally distinguishable around $\varrho_{E^\perp}$.

%%%%%%%%%%%%%%%%%%%%%%%%%%%%%%%%%%%%%%%%%%%%%%%%%%
%%%%%%%%%%%%%%%%%%%%%%%%%%%%%%%%%%%%%%%%%%%%%%%%%%
%%%%%%%%%%%%%%%%%%%%%%%%%%%%%%%%%%%%%%%%%%%%%%%%%%
%%%%%%%%%%%%%%%%%%%%%%%%%%%%%%%%%%%%%%%%%%%%%%%%%%
%%%%%%%%%%%%%%%%%%%%%%%%%%%%%%%%%%%%%%%%%%%%%%%%%%
\section{Principal supporting hyperplanes for Schmidt number witnesses}

We denote by $\mathcal S_k$ the convex hull of pure states
onto vectors in $\mathbb C^m\ot \mathbb C^n$ with Schmidt rank $\le k$ in the affine subspace $\mathcal H$
of $(M_m\ot M_n)^{\rm h}$ with trace one. We will assume  $m\le n$ without loss of generality.
Then we get the chain
$$
{\mathcal S}_1\subsetneqq {\mathcal S}_2\subsetneqq\cdots \subsetneqq{\mathcal S}_k
\subsetneqq\cdots \subsetneqq {\mathcal S}_{m-1}\subsetneqq {\mathcal S}_m=\mathcal D.
$$
of strict inclusions. A state in $\mathcal D$ is entangled if and only if it does not belong to ${\mathcal S}_1$.
A state is called to have {\sl Schmidt number $k$} when it belongs to ${\mathcal S}_k\setminus {\mathcal S}_{k-1}$.
The Schmidt number of a state $\varrho$ will be denoted by $\SN(\varrho)$.

A Hermitian matrix $W\in M_m\ot M_n$
is called {\sl $k$-blockpositive}  if
$\lan\xi|W|\xi\ran\ge 0$ for every $|\xi\ran$ with $\sr|\xi\ran\le k$.
It is clear that the dual cone $\mathcal S_k^\circ$ of $\mathcal S_k$ is just the convex cone
$\blockpos_k$ consisting of all $k$-blockpositive matrices.
Note that the hyperplane $\mathcal H$ in $(M_m\ot M_n)^{\rm h}$ determined by the trace
arises from the identity matrix, which is an interior point of $\blockpos_k^\circ=\mathcal S_k$.
Using Proposition \ref{pro_prinSH1}, we abuse the notation $\blockpos_k$ to denote both the \lq\lq convex cone\rq\rq\
and the normalized \lq\lq convex compact set\rq\rq\ of $k$-blockpositive matrices.
Then the dual cone of $\blockpos_k$ is also given by
the \lq\lq convex cone\rq\rq\ $\mathcal S_k$
by another abuse of notation. The convex cone $\mathcal S_k$ was introduced in \cite{eom-kye}
as the dual object of $k$-positive maps with respect to the bilinear pairing between mapping space and tensor product.

Now, we have the chain
$$
\mathcal S_1\subsetneqq \cdots\subsetneqq \mathcal S_k \subsetneqq \cdots\subsetneqq
\mathcal S_{m} =\mathcal D=\blockpos_{m}\subsetneqq\cdots\subsetneqq\blockpos_k \subsetneqq \cdots\subsetneqq\blockpos_1
$$
of strict inclusions of compact convex sets.
We note that $\SN(\varrho)\ge k$ if and only if $\varrho\notin \mathcal
S_{k-1}$ if and only if $\lan\varrho, W\ran<0$ for some $W\in\blockpos_{k-1}$,
where $\lan x,y\ran=\tr(xy)$ for Hermitian matrices $x$ and $y$. If $\SN(\varrho)=k$ then
such a $W$ belongs to $\blockpos_{k-1}\setminus\blockpos_{k}$. A
Hermitian matrix $W\in\blockpos_{k-1}\setminus\blockpos_{k}$ is
called a {\sl Schmidt number $k$ witness}.

By the definition of $\mathcal S_k$, we see that every extreme ray of the convex cone $\mathcal S_k$ is
of the form $|\xi\ran\lan\xi|$ with $\SR|\xi\ran\le k$. Conversely, it is easily seen that
$|\xi\ran\lan\xi|$ generates an extreme ray of the cone $\mathcal S_k$ whenever $\SR|\xi\ran\le k$.
In fact, it is known \cite{{marcin_exp},{kye_ad_s}} that $|\xi\ran\lan\xi|$
generates an exposed extreme ray of the much bigger cone $\blockpos_1$,
as the Choi matrix of the map $a\mapsto s^*as:M_m\to M_n$ for an $m\times n$ matrix $s$.
By Propositions \ref {pro_prinSH} and \ref{pro_prinSH1}, we have the following:

\begin{theorem}
Suppose that $E$ is a subspace of $\mathbb C^m\ot\mathbb C^n$. Then principal supporting hyperplanes of
the convex set $\blockpos_k$
through $\varrho_{E^\perp}$ correspond to vectors in the set $S(E,k)$.
\end{theorem}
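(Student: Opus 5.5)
The plan is to work at the level of cones and then pass to the trace-one slice. Apply Proposition \ref{pro_prinSH1} with $C=\mathcal S_k$ (a closed convex cone with nonempty interior in $(M_m\ot M_n)^{\rm h}$), $C^\circ=\blockpos_k$, and $y$ the identity matrix. Since the identity is an interior point of $\blockpos_k$, condition $(C_1)$ holds. More precisely, the roles of $C$ and $C^\circ$ are interchanged via $C=C^{\circ\circ}$: we consider the cone $\blockpos_k$, its dual $\mathcal S_k$, and the identity as an interior point of $\mathcal S_k=\blockpos_k^\circ$, so that $L_I=\mathcal H$.

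Proposition \ref{pro_prinSH1} then identifies supporting hyperplanes of the compact set $\blockpos_k\cap\mathcal H$ with nontrivial supporting hyperplanes $H_\sigma$ of the cone $\blockpos_k$, where $\sigma\in\partial\mathcal S_k$. It also says that principality is preserved under this correspondence.

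Next I pin down which $\sigma$ give hyperplanes through $\varrho_{E^\perp}$. The condition is $\lan \varrho_{E^\perp},\sigma\ran=0$ with $\sigma\in\mathcal S_k$. Since $\sigma\ge0$ and $P_{E^\perp}\ge0$, we have $\tr(P_{E^\perp}\sigma)=0$ exactly when the range of $\sigma$ lies in $E$. So the admissible $\sigma$ form the face
$$\{\sigma\in\mathcal S_k:\ \text{range of }\sigma\subset E\},$$
which is the dual face of $F_{E^\perp}$. This face is nonzero precisely when $S(E,k)\neq\{0\}$.

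By Proposition \ref{pro_prinSH}, applied with $\mathcal S_k$ in the role of $C$, the hyperplane $H_\sigma$ is principal if and only if $\sigma$ generates an extreme ray of $\mathcal S_k$. The hypothesis $\mathcal S_k\cap(-\mathcal S_k)=\{0\}$ holds because $\mathcal S_k$ consists of positive semidefinite matrices. As recalled just before the statement, the extreme rays of $\mathcal S_k$ are exactly those generated by $|\xi\ran\lan\xi|$ with $\SR|\xi\ran\le k$. Intersecting with the face above forces $|\xi\ran\in E$. Hence principal supporting hyperplanes through $\varrho_{E^\perp}$ are exactly $H_{|\xi\ran\lan\xi|}\cap\mathcal H$ with $|\xi\ran\in S(E,k)$ nonzero.

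Finally, $H_{|\xi\ran\lan\xi|}$ depends only on the ray, so the correspondence is with vectors of $S(E,k)$ up to nonzero scalar multiples, i.e.\ with points of the projectivization. I would state this explicitly as the meaning of "correspond".

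The main obstacle is bookkeeping rather than substance. Propositions \ref{pro_prinSH} and \ref{pro_prinSH1} are stated with the dual roles in a specific direction, so they must be applied after swapping $C$ and $C^\circ$. I also need to check that a principal hyperplane of the compact slice through $\varrho_{E^\perp}$ lifts to a principal $H_\sigma$ passing through $\varrho_{E^\perp}$. Principality is defined among all supporting hyperplanes, not just those through the point, so I must make sure the competing $H_1,H_2$ in the definition can be taken through $\varrho_{E^\perp}$ as well. This works because $C\subset H_1^+\cap H_2^+\subset H^+$ with $\varrho_{E^\perp}\in H\cap C$ forces $\varrho_{E^\perp}\in H_1\cap H_2$.
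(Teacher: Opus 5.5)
Your proposal is correct and is essentially the paper's argument. You apply Proposition \ref{pro_prinSH1} to the cone $\blockpos_k$ with $y=I$ interior to $\mathcal S_k$. You note that $\lan\varrho_{E^\perp},\sigma\ran=0$ for $\sigma\in\mathcal S_k$ means the range of $\sigma$ lies in $E$, i.e.\ $\sigma$ lies in the dual face of $F_{E^\perp}$. You then use Proposition \ref{pro_prinSH} together with the extreme rays $|\xi\ran\lan\xi|$, $\SR|\xi\ran\le k$, of $\mathcal S_k$.

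The worry in your last paragraph is unnecessary, since principality is a property of the hyperplane alone and you simply keep the principal ones that pass through $\varrho_{E^\perp}$. Your stated justification is also not literally true: take $H_2=H$ and $H_1$ any supporting hyperplane missing $\varrho_{E^\perp}$. Nothing in the argument depends on it.
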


The geometry of the algebraic variety $S(E,k)$
depends heavily on the dimension of $E$, as we see in the following theorem.

\begin{theorem}\label{main} Let $m\le n$.
For subspaces $E$ in $\mathbb C^m\ot\mathbb C^n$, we have the following:
\begin{enumerate}
\item[{\rm (i)}]
If $\dim E\le (m-k)(n-k)$, then
$S(E,k)=\{0\}$
for generic $E$.
\item[{\rm (ii)}]
If $\dim E> (m-k)(n-k)$, then $S(E,k)$ is nonzero and $$\dim S(E,k)\ge\dim E-(m-k)(n-k).$$
For generic $E$, $\dim S'(E,k)=\dim S(E,k)=\dim E-(m-k)(n-k).$
\item[{\rm (iii)}]
If $\dim E= (m-k)(n-k)+1$ and $E$ is generic,
$S^\prime(E,k)=S(E,k)-\{0\}$ and it
consists of finitely many rays whose cardinality is
$$
%\prod_{i=0}^{m-k-1}\frac{\binom{n+i}{n-k}}{\binom{n-k+i}{n-k}} =
\displaystyle\prod_{i=0}^{m-k-1}\frac{(n+i)!\, i!}{(m-1-i)! (n-k+i)!},
$$
where a \emph{ray} means a $1$-dimensional subspace minus the origin,
that is, a nonzero vector up to scalar multiplication.
\end{enumerate}
\end{theorem}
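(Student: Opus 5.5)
We identify $\mathbb C^m\ot\mathbb C^n$ with the space $M_{m\times n}$ of $m\times n$ matrices, so that the Schmidt rank of $|\xi\ran$ is the rank of the corresponding matrix. Then $S(E,k)$ is the affine cone $E\cap \Sigma_k$, where $\Sigma_k=\{A\in M_{m\times n}:\operatorname{rank}A\le k\}$ is the determinantal variety. We pass to projective space $\mathbb P^{mn-1}$ and write $\mathbb P\Sigma_k$ for its projectivization. Classically $\mathbb P\Sigma_k$ is irreducible of codimension $(m-k)(n-k)$ and singular exactly along $\mathbb P\Sigma_{k-1}$, which has strictly larger codimension $(m-k+1)(n-k+1)$. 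Its degree is given by the Giambelli--Thom--Porteous formula
$$\deg \mathbb P\Sigma_k=\prod_{i=0}^{m-k-1}\frac{(n+i)!\,i!}{(k+i)!\,(n-k+i)!}.$$
Everything follows by intersecting this variety with a linear subspace $\mathbb PE$ of dimension $\dim E-1$.

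For the dimension bound in (ii) we use the projective dimension theorem. Every component of $\mathbb PE\cap \mathbb P\Sigma_k$ has dimension at least $(\dim E-1)+(mn-1-(m-k)(n-k))-(mn-1)=\dim E-(m-k)(n-k)-1$, and this intersection is nonempty as soon as that number is $\ge 0$. Passing to affine cones, every nonzero component of $S(E,k)$ has dimension $\ge\dim E-(m-k)(n-k)$, which gives the inequality.

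For generic $E$ we apply Kleiman--Bertini, or more elementarily an incidence-variety count over the Grassmannian $G(d,mn)$, on which $GL_{mn}$ acts transitively. The incidence variety
$$I=\{(A,E):A\in\mathbb P\Sigma_k,\ A\in\mathbb PE\}$$
is a Grassmannian bundle over $\mathbb P\Sigma_k$, so its dimension is $\dim\mathbb P\Sigma_k+\dim G(d-1,mn-1)$. Comparing with $\dim G(d,mn)$ shows that the generic fibre is empty when $d\le(m-k)(n-k)$, which proves (i). When $d>(m-k)(n-k)$ the generic fibre has pure dimension $d-(m-k)(n-k)-1$. The same count applied to $\mathbb P\Sigma_{k-1}$ shows that a generic $\mathbb PE$ meets $\Sigma_{k-1}$ in dimension $d-(m-k+1)(n-k+1)-1$, which is strictly smaller, or meets it not at all. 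Hence $S(E,k)\setminus S(E,k-1)=S'(E,k)$ is dense in $S(E,k)$, so $\dim S'(E,k)=\dim S(E,k)$, completing (ii).

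For (iii), when $d=(m-k)(n-k)+1$ the intersection is zero-dimensional. In that case $d-(m-k+1)(n-k+1)-1<0$, so a generic $E$ avoids $\Sigma_{k-1}\setminus\{0\}$, which gives $S'(E,k)=S(E,k)-\{0\}$. Because $\mathbb PE$ then misses the singular locus, Bertini in characteristic zero (generic transversality via Kleiman's theorem) makes the intersection reduced. So the number of points, that is rays, equals $\deg\mathbb P\Sigma_k$.

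It remains to match the Porteous product with the paper's expression. Since $\prod_{i=0}^{m-k-1}(k+i)!=\prod_{i=0}^{m-k-1}(m-1-i)!$ (reindex by $i\mapsto m-k-1-i$), the two products agree. We expect the main obstacle to be justifying this degree formula and the reducedness. We would cite Harris, \emph{Algebraic Geometry}, Example 19.10 (or Fulton's Porteous formula) rather than rederive the degree, and give only the transversality argument explicitly.
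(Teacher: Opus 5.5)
Your proposal is correct and follows essentially the paper's route: the codimension $(m-k)(n-k)$ of the determinantal variety, the projective dimension theorem for $\mathbb P E\cap\mathbb P\Sigma_k$, generic linear sections, and the Porteous/Giambelli degree formula (with your reindexing check) for (iii). Where the paper only cites Harris's Definitions 11.2, 11.3 and 18.1, you justify the generic statements via the incidence correspondence and Kleiman transversality. You also compare with $\Sigma_{k-1}$ to obtain $\dim S'(E,k)=\dim S(E,k)$ and $S'(E,k)=S(E,k)-\{0\}$, which the paper leaves implicit.
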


\begin{proof}
To prove this theorem, we will use projective geometry, more precisely the dimension theory
in \cite[Chapters 11 and 12]{HarrisAG} as well as a well-known degree formula in \cite{Fulton}.

Recall that the projectivization of a vector space $V$ (respectively a cone $C$ in $V$) is defined as the quotient space
$$
\mathbb P V=(V-0)/\mathbb C^* \quad \text{(respectively}\quad \mathbb P C= (C-0)/\mathbb C^* \text{)},
$$
where $\mathbb C^*$ acts by scalar multiplication. Let
$$
C_k=\{|\xi\ran\in \mathbb C^m\ot \mathbb C^n: \SR |\xi\ran \le k\}
$$
so that $S(E,k)=C_k\cap E.$
Note that $C_k$ is a cone, that is, closed under nonzero scalar multiplications.

The dimension of a smooth variety is the same as its dimension as a complex manifold, i.e.
the number of coordinates in a local chart. For example, we have $\dim \mathbb P \mathbb C^N=N-1$
since $\frac{x_1}{x_N},\cdots, \frac{x_{N-1}}{x_N}$ are local coordinates at points with $x_N\ne 0$.
%and $\dim \mathbb P E=\dim E-1$.
If an algebraic variety $X$ is not smooth, we can define the dimension of $X$
as the dimension of the smooth part $X^{\mathrm{sm}}$, the complement of singular points.
When a variety $X$ is a closed subset of a projective space $\mathbb P V$,
there are several equivalent definitions of the dimension of $X$ in terms of its intersection properties with linear subspaces.
See \cite[Chapter 11]{HarrisAG} for details.

By \cite[Proposition 12.2]{HarrisAG}, $\mathbb P C_k$ has codimension
$(m-k)(n-k)$ in $\mathbb P (\mathbb C^m\ot \mathbb C^n).$
By the definition of $S(E,k)$, we have
$\mathbb P S(E,k)=\mathbb P E\cap \mathbb P C_k$
and
$$
\dim S(E,k)=\dim \left( \mathbb P E\cap \mathbb P C_k \right) +1
$$
when $S(E,k)$ is nonzero.

Since the dimension of $\mathbb P E$ is $\dim E-1$,
(i) is exactly \cite[Definition 11.2]{HarrisAG}.
Also (ii) follows immediately by repeatedly applying \cite[Exercise 11.6]{HarrisAG}
since $\mathbb P E$ is an intersection of hyperplanes. Finally, (iii) is precisely \cite[Definition 11.3]{HarrisAG} and
the formula for the cardinality is precisely \cite[Example 14.4.14]{Fulton},
because the degree of $\mathbb P C_k$ is the cardinality of $\mathbb P C_k\cap \mathbb P E$
for generic $E$ by \cite[Definition 18.1 (iii)]{HarrisAG}.
\end{proof}

Note that the Schmidt rank of a vector in $\mathbb C^m\ot\mathbb C^n$  is nothing but the rank of the corresponding
$m\times n$ matrix, and the property of \lq\lq rank $\le k$\rq\rq\ of an $m\times n$ matrix is locally determined by a system
of $(m-k)(n-k)$ equations. On the other hand, a subspace of dimension $d$ has
$d-1$ unknowns up to scalar multiplications. Comparing the numbers of equations and unknowns,
the claims (i) and (ii) of Theorem \ref{main} may look reasonable.
The claim (iii) deals with
the case when the  numbers of equations and unknowns coincide.

Recall that the number $\kappa_d$ is given by (\ref{def_kappa}).
As an immediate consequence of Theorem \ref{main}, we have the following;
\begin{itemize}
\item
for generic subspaces $E$, we have $S^\prime(E,\ell)=\emptyset$ for $\ell=1,\dots,\kappa_{\dim E}-1$,
\item
for generic subspaces $E$, we have $S^\prime(E,\ell)\neq \emptyset$ for $\ell\ge\kappa_{\dim E}$,
\end{itemize}
Therefore, we conclude that for generic subspaces $E$
there exist Schmidt number $\ell$ witnesses outside of the face $F_{E^\perp}$
for $\ell=2,3,\dots,\kappa_{\dim E}$, and the inclusion $\blockpos_{\ell}\supsetneqq\blockpos_{\ell+1}$ is locally distinguishable
by principal supporting hyperplanes
around the projection state $\varrho_{E^\perp}$ for $\ell\ge\kappa_{\dim E}$. Thus we get the following:

\begin{theorem}\label{main_snw}
For a generic subspace $E$, there exist Schmidt number $\ell$ witnesses around the projection state $\varrho_{E^\perp}$
for $\ell=\kappa+1,\kappa+2,\dots, m\meet n$, where $\kappa=\kappa_{\dim E}$.
\end{theorem}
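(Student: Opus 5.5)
The plan is to show, for each $\ell$ with $\kappa<\ell\le m\meet n$, that the inclusion $\blockpos_{\ell-1}\supsetneqq\blockpos_{\ell}$ is locally distinguishable around $\varrho_{E^\perp}$. Any point of $N\cap(\blockpos_{\ell-1}\setminus\blockpos_\ell)$ is, by definition, a Schmidt number $\ell$ witness in the neighborhood $N$. By the discussion at the end of Section 2, it suffices to find a supporting hyperplane of $\blockpos_\ell$ through $\varrho_{E^\perp}$ which is not a supporting hyperplane of $\blockpos_{\ell-1}$. If $H$ is such a hyperplane, pick $W\in\blockpos_{\ell-1}$ outside $H^+$; the open segment from $\varrho_{E^\perp}$ to $W$ then lies in $\blockpos_{\ell-1}\setminus\blockpos_\ell$ and comes arbitrarily close to $\varrho_{E^\perp}$.

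For the hyperplane I use the correspondence established above. By Propositions \ref{pro_prinSH} and \ref{pro_prinSH1}, together with the preceding theorem, supporting hyperplanes of $\blockpos_\ell$ through $\varrho_{E^\perp}$ are of the form $H_z$ with $z=|\xi\ran\lan\xi|$. Here $\SR|\xi\ran\le\ell$, and the condition $\lan \varrho_{E^\perp},z\ran=0$ forces $|\xi\ran\in E$. First I check that $\varrho_{E^\perp}$ is a common boundary point. By Theorem \ref{main}(ii), generic $E$ satisfies $S(E,\kappa)\ne\{0\}$, because $\dim E>(m-\kappa)(n-\kappa)$ by (\ref{def_kappa}). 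So there is $|\eta\ran\in E$ with $\SR\le\kappa\le\ell-1$, and then $\lan\eta|\varrho_{E^\perp}|\eta\ran=0$ puts $\varrho_{E^\perp}$ on the boundary of $\blockpos_{\ell-1}\supset\blockpos_\ell$. Next I pick $|\xi\ran\in S'(E,\ell)$, which is nonempty for generic $E$ since $\ell\ge\kappa_{\dim E}$; this is the consequence of Theorem \ref{main} listed just before the statement. The hyperplane $H=H_{|\xi\ran\lan\xi|}$ is then a principal supporting hyperplane of $\blockpos_\ell$ through $\varrho_{E^\perp}$.

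The main obstacle is to prove that $H$ does not support $\blockpos_{\ell-1}$. Equivalently, some $W\in\blockpos_{\ell-1}$ must satisfy $\lan\xi|W|\xi\ran<0$. Such a $W$ exists because $|\xi\ran\lan\xi|\notin\mathcal S_{\ell-1}$. Indeed, $|\xi\ran\lan\xi|$ is an extreme ray of $\mathcal D$, so any decomposition of it into rank-one terms uses only multiples of $|\xi\ran$ itself, and $|\xi\ran$ has Schmidt rank exactly $\ell$. Hence $|\xi\ran\lan\xi|$ is not a sum of projections onto vectors of Schmidt rank $\le\ell-1$. By the duality $\mathcal S_{\ell-1}=\blockpos_{\ell-1}^\circ$ (using closedness of $\mathcal S_{\ell-1}$), some $W\in\blockpos_{\ell-1}$ pairs negatively with $|\xi\ran\lan\xi|$. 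One then normalizes $W$ to trace one, which is possible because the identity is interior to $\mathcal S_{\ell-1}$ and therefore $\tr W>0$. This $W$ lies outside $H^+$, and the segment argument of the first paragraph gives Schmidt number $\ell$ witnesses arbitrarily near $\varrho_{E^\perp}$. Genericity is needed only to guarantee $S'(E,\ell)\neq\emptyset$ for every $\ell>\kappa$ at once. This holds because the generic conditions for the finitely many values of $\ell$ intersect to a generic set.
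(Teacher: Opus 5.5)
Your proposal is correct and follows the paper's own route. For $\ell>\kappa$ you take $|\xi\ran\in S'(E,\ell)$, which is nonempty for generic $E$ by Theorem \ref{main}, use the principal supporting hyperplane $H_{|\xi\ran\lan\xi|}$ of $\blockpos_\ell$ through $\varrho_{E^\perp}$, and apply the segment criterion for local distinguishability from Section 2; you also spell out a step the paper leaves implicit, namely that this hyperplane does not support $\blockpos_{\ell-1}$ because $|\xi\ran\lan\xi|\notin\mathcal S_{\ell-1}=\blockpos_{\ell-1}^\circ$ (your claim that every supporting hyperplane through $\varrho_{E^\perp}$ is of rank-one type holds only for the principal ones, but the argument does not use it).
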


%%%%%%%%%%%%%%%%%%%%%%%%%%%%%%%%%%%%%%%%%%%%%%%%%%
%%%%%%%%%%%%%%%%%%%%%%%%%%%%%%%%%%%%%%%%%%%%%%%%%%
%%%%%%%%%%%%%%%%%%%%%%%%%%%%%%%%%%%%%%%%%%%%%%%%%%
%%%%%%%%%%%%%%%%%%%%%%%%%%%%%%%%%%%%%%%%%%%%%%%%%%
%%%%%%%%%%%%%%%%%%%%%%%%%%%%%%%%%%%%%%%%%%%%%%%%%%
\section{Examples}

For a non-generic case, the inclusion $\blockpos_{\ell}\supsetneqq\blockpos_{\ell+1}$ may not be distinguished
by principal supporting hyperplanes. For examples, we consider one dimensional space $E$ spanned by
a vector $|\xi\ran$ of Schmidt rank $k$, with $k< m\meet n$. In this case, we have
$S^\prime (E,\ell)=\emptyset$ for $\ell >k$, and so
$\blockpos_{\ell-1}\supsetneqq\blockpos_\ell$ cannot be locally distinguishable
by principal supporting hyperplanes around $\varrho_{E^\perp}$.
We provide concrete examples.

\medskip
\noindent
{\bf Example 1}.
Take the one dimensional subspace $E_1$ (respectively $E_2$) of $\mathbb C^2\ot\mathbb C^2$ spanned by $|00\ran$
(respectively $|00\ran+|11\ran$)  which has Schmidt rank one (respectively two). Considering the two
dimensional section determined by the projection states $\varrho_{E_1}$, $\varrho_{E_2}$ and the
maximally mixed state $\varrho_*$, we define two parameter family
$$
\begin{aligned}
X_{s,t}
&=(1-s-t)\varrho_*+s \varrho_{E_1} +t \varrho_{E_2}\\
&=\frac 14\left(\begin{matrix}
1+3s+t &\cdot &\cdot &2t\\
\cdot &1-s-t &\cdot &\cdot\\
\cdot &\cdot &1-s-t &\cdot\\
2t&\cdot &\cdot & 1-s+t
\end{matrix}\right)\in M_2\ot M_2.
\end{aligned}
$$
Then $X_{s,t}\in\blockpos_2=\mathcal D$ if and only if $(s,t)$ satisfies
$$
\max\{s-1,-3s-1\}\le t \le -s+1,\qquad
%3s^2+4st+5t^2-2s-1\le 0
3s^2-2st+3t^2-2s-2t-1\le 0.
$$
On the other hand, we use Theorem 5.5 in \cite{han_kye_multi} to see that $X_{s,t}\in\blockpos_1$ if and only if
$$
\begin{aligned}
&\max\{s-1,-3s-1\}\le t \le -s+1,\\
&\sqrt{(1+3s+t)(1-s+t)}+\sqrt{(1-s-t)^2}\ge |2t|.
\end{aligned}
$$
See {\sc Figure 4}.

\begin{figure}
\begin{center}
\setlength{\unitlength}{2.5 truecm}
\begin{picture}(1.5,2)
\thicklines

\put(0.4,1){\circle*{0.03}}\put(0.27,1.05){$\varrho_{*}$}
\put(0.06667,1){\circle*{0.03}}\put(-0.23,1){$\varrho_{E_1^\perp}$}
\put(0.733333,1.942809){\circle*{0.03}}\put(0.76,1.96){$\varrho_{E_2}$}
\put(0.28889,0.68574){\circle*{0.03}}\put(0.28889,0.57){$\varrho_{E_2^\perp}$}
\put(1.399999,1){\circle*{0.03}}\put(1.419999,1){$\varrho_{E_1}$}
\put(0.06667,0){\circle*{0.03}}
\dottedline{0.05}(0.06667,1)(1.399999,1)
\dottedline{0.05}(0.733333,1.942809)(0.06667,0)

\dottedline{0.005}(0.733333,1.942809)(1.399999,1)(0.06667,0)(0.06667,1.471404)

\put(0.06667,1){\circle*{0.001}}
\put(0.06669,1.004194){\circle*{0.001}}
\put(0.06675,1.008398){\circle*{0.001}}
\put(0.06685,1.012611){\circle*{0.001}}
\put(0.06698,1.016831){\circle*{0.001}}
\put(0.06716,1.02106){\circle*{0.001}}
\put(0.06737,1.025296){\circle*{0.001}}
\put(0.06762,1.02954){\circle*{0.001}}
\put(0.06791,1.033791){\circle*{0.001}}
\put(0.06824,1.038048){\circle*{0.001}}
\put(0.0686,1.042312){\circle*{0.001}}
\put(0.069,1.046581){\circle*{0.001}}
\put(0.06944,1.050856){\circle*{0.001}}
\put(0.06991,1.055137){\circle*{0.001}}
\put(0.07042,1.059423){\circle*{0.001}}
\put(0.07096,1.063713){\circle*{0.001}}
\put(0.07154,1.068008){\circle*{0.001}}
\put(0.07216,1.072307){\circle*{0.001}}
\put(0.0728,1.07661){\circle*{0.001}}
\put(0.07349,1.080916){\circle*{0.001}}
\put(0.0742,1.085225){\circle*{0.001}}
\put(0.07496,1.089538){\circle*{0.001}}
\put(0.07574,1.093852){\circle*{0.001}}
\put(0.07656,1.09817){\circle*{0.001}}
\put(0.07741,1.102489){\circle*{0.001}}
\put(0.07829,1.10681){\circle*{0.001}}
\put(0.07921,1.111133){\circle*{0.001}}
\put(0.08015,1.115456){\circle*{0.001}}
\put(0.08113,1.119781){\circle*{0.001}}
\put(0.08214,1.124106){\circle*{0.001}}
\put(0.08318,1.128432){\circle*{0.001}}
\put(0.08426,1.132757){\circle*{0.001}}
\put(0.08536,1.137083){\circle*{0.001}}
\put(0.08649,1.141408){\circle*{0.001}}
\put(0.08765,1.145732){\circle*{0.001}}
\put(0.08884,1.150055){\circle*{0.001}}
\put(0.09006,1.154378){\circle*{0.001}}
\put(0.09131,1.158698){\circle*{0.001}}
\put(0.09259,1.163017){\circle*{0.001}}
\put(0.0939,1.167334){\circle*{0.001}}
\put(0.09523,1.171649){\circle*{0.001}}
\put(0.09659,1.175961){\circle*{0.001}}
\put(0.09798,1.180271){\circle*{0.001}}
\put(0.0994,1.184577){\circle*{0.001}}
\put(0.10084,1.188881){\circle*{0.001}}
\put(0.10231,1.193181){\circle*{0.001}}
\put(0.1038,1.197477){\circle*{0.001}}
\put(0.10532,1.20177){\circle*{0.001}}
\put(0.10687,1.206059){\circle*{0.001}}
\put(0.10844,1.210343){\circle*{0.001}}
\put(0.11003,1.214623){\circle*{0.001}}
\put(0.11165,1.218898){\circle*{0.001}}
\put(0.1133,1.223168){\circle*{0.001}}
\put(0.11496,1.227433){\circle*{0.001}}
\put(0.11665,1.231693){\circle*{0.001}}
\put(0.11837,1.235947){\circle*{0.001}}
\put(0.12011,1.240196){\circle*{0.001}}
\put(0.12187,1.244438){\circle*{0.001}}
\put(0.12365,1.248675){\circle*{0.001}}
\put(0.12545,1.252905){\circle*{0.001}}
\put(0.12728,1.257129){\circle*{0.001}}
\put(0.12912,1.261346){\circle*{0.001}}
\put(0.13099,1.265557){\circle*{0.001}}
\put(0.13288,1.26976){\circle*{0.001}}
\put(0.13479,1.273956){\circle*{0.001}}
\put(0.13672,1.278145){\circle*{0.001}}
\put(0.13867,1.282327){\circle*{0.001}}
\put(0.14064,1.286501){\circle*{0.001}}
\put(0.14263,1.290666){\circle*{0.001}}
\put(0.14464,1.294824){\circle*{0.001}}
\put(0.14666,1.298974){\circle*{0.001}}
\put(0.14871,1.303116){\circle*{0.001}}
\put(0.15077,1.307249){\circle*{0.001}}
\put(0.15285,1.311373){\circle*{0.001}}
\put(0.15495,1.315489){\circle*{0.001}}
\put(0.15707,1.319596){\circle*{0.001}}
\put(0.1592,1.323694){\circle*{0.001}}
\put(0.16135,1.327782){\circle*{0.001}}
\put(0.16352,1.331862){\circle*{0.001}}
\put(0.1657,1.335932){\circle*{0.001}}
\put(0.1679,1.339992){\circle*{0.001}}
\put(0.17011,1.344043){\circle*{0.001}}
\put(0.17234,1.348084){\circle*{0.001}}
\put(0.17459,1.352115){\circle*{0.001}}
\put(0.17685,1.356136){\circle*{0.001}}
\put(0.17912,1.360147){\circle*{0.001}}
\put(0.18141,1.364148){\circle*{0.001}}
\put(0.18371,1.368139){\circle*{0.001}}
\put(0.18603,1.372119){\circle*{0.001}}
\put(0.18836,1.376088){\circle*{0.001}}
\put(0.1907,1.380047){\circle*{0.001}}
\put(0.19306,1.383994){\circle*{0.001}}
\put(0.19543,1.387932){\circle*{0.001}}
\put(0.19781,1.391858){\circle*{0.001}}
\put(0.20021,1.395773){\circle*{0.001}}
\put(0.20261,1.399677){\circle*{0.001}}
\put(0.20503,1.403569){\circle*{0.001}}
\put(0.20746,1.40745){\circle*{0.001}}
\put(0.2099,1.41132){\circle*{0.001}}
\put(0.21236,1.415179){\circle*{0.001}}
\put(0.21482,1.419026){\circle*{0.001}}
\put(0.21729,1.422861){\circle*{0.001}}
\put(0.21978,1.426685){\circle*{0.001}}
\put(0.22227,1.430496){\circle*{0.001}}
\put(0.22478,1.434296){\circle*{0.001}}
\put(0.22729,1.438084){\circle*{0.001}}
\put(0.22982,1.44186){\circle*{0.001}}
\put(0.23235,1.445624){\circle*{0.001}}
\put(0.2349,1.449376){\circle*{0.001}}
\put(0.23745,1.453115){\circle*{0.001}}
\put(0.24001,1.456843){\circle*{0.001}}
\put(0.24258,1.460558){\circle*{0.001}}
\put(0.24516,1.464261){\circle*{0.001}}
\put(0.24774,1.467951){\circle*{0.001}}
\put(0.25033,1.471629){\circle*{0.001}}
\put(0.25294,1.475294){\circle*{0.001}}
\put(0.25554,1.478947){\circle*{0.001}}
\put(0.25816,1.482587){\circle*{0.001}}
\put(0.26078,1.486215){\circle*{0.001}}
\put(0.26341,1.48983){\circle*{0.001}}
\put(0.26605,1.493432){\circle*{0.001}}
\put(0.26869,1.497022){\circle*{0.001}}
\put(0.27134,1.500598){\circle*{0.001}}
\put(0.274,1.504162){\circle*{0.001}}
\put(0.27666,1.507713){\circle*{0.001}}
\put(0.27932,1.511252){\circle*{0.001}}
\put(0.282,1.514777){\circle*{0.001}}
\put(0.28468,1.518289){\circle*{0.001}}
\put(0.28736,1.521789){\circle*{0.001}}
\put(0.29005,1.525275){\circle*{0.001}}
\put(0.29274,1.528748){\circle*{0.001}}
\put(0.29544,1.532208){\circle*{0.001}}
\put(0.29814,1.535656){\circle*{0.001}}
\put(0.30085,1.53909){\circle*{0.001}}
\put(0.30356,1.542511){\circle*{0.001}}
\put(0.30627,1.545919){\circle*{0.001}}
\put(0.30899,1.549314){\circle*{0.001}}
\put(0.31171,1.552695){\circle*{0.001}}
\put(0.31444,1.556064){\circle*{0.001}}
\put(0.31717,1.559419){\circle*{0.001}}
\put(0.3199,1.562761){\circle*{0.001}}
\put(0.32264,1.566091){\circle*{0.001}}
\put(0.32538,1.569406){\circle*{0.001}}
\put(0.32812,1.572709){\circle*{0.001}}
\put(0.33086,1.575998){\circle*{0.001}}
\put(0.33361,1.579275){\circle*{0.001}}
\put(0.33636,1.582538){\circle*{0.001}}
\put(0.33911,1.585788){\circle*{0.001}}
\put(0.34186,1.589024){\circle*{0.001}}
\put(0.34462,1.592248){\circle*{0.001}}
\put(0.34737,1.595458){\circle*{0.001}}
\put(0.35013,1.598655){\circle*{0.001}}
\put(0.35289,1.601839){\circle*{0.001}}
\put(0.35565,1.605009){\circle*{0.001}}
\put(0.35842,1.608167){\circle*{0.001}}
\put(0.36118,1.611311){\circle*{0.001}}
\put(0.36395,1.614442){\circle*{0.001}}
\put(0.36671,1.61756){\circle*{0.001}}
\put(0.36948,1.620665){\circle*{0.001}}
\put(0.37224,1.623757){\circle*{0.001}}
\put(0.37501,1.626836){\circle*{0.001}}
\put(0.37778,1.629901){\circle*{0.001}}
\put(0.38055,1.632953){\circle*{0.001}}
\put(0.38332,1.635993){\circle*{0.001}}
\put(0.38609,1.639019){\circle*{0.001}}
\put(0.38886,1.642032){\circle*{0.001}}
\put(0.39162,1.645032){\circle*{0.001}}
\put(0.39439,1.64802){\circle*{0.001}}
\put(0.39716,1.650994){\circle*{0.001}}
\put(0.39993,1.653955){\circle*{0.001}}
\put(0.402688,1.656903){\circle*{0.001}}
\put(0.405454,1.659839){\circle*{0.001}}
\put(0.40822,1.662761){\circle*{0.001}}
\put(0.410984,1.665671){\circle*{0.001}}
\put(0.413748,1.668567){\circle*{0.001}}
\put(0.41651,1.671451){\circle*{0.001}}
\put(0.419271,1.674322){\circle*{0.001}}
\put(0.422031,1.67718){\circle*{0.001}}
\put(0.42479,1.680026){\circle*{0.001}}
\put(0.427547,1.682859){\circle*{0.001}}
\put(0.430303,1.685679){\circle*{0.001}}
\put(0.433057,1.688486){\circle*{0.001}}
\put(0.435809,1.691281){\circle*{0.001}}
\put(0.438559,1.694063){\circle*{0.001}}
\put(0.441308,1.696833){\circle*{0.001}}
\put(0.444055,1.699589){\circle*{0.001}}
\put(0.446799,1.702334){\circle*{0.001}}
\put(0.449542,1.705066){\circle*{0.001}}
\put(0.452282,1.707785){\circle*{0.001}}
\put(0.45502,1.710492){\circle*{0.001}}
\put(0.457756,1.713187){\circle*{0.001}}
\put(0.460489,1.715869){\circle*{0.001}}
\put(0.46322,1.718539){\circle*{0.001}}
\put(0.465948,1.721196){\circle*{0.001}}
\put(0.468673,1.723842){\circle*{0.001}}
\put(0.471396,1.726475){\circle*{0.001}}
\put(0.474116,1.729096){\circle*{0.001}}
\put(0.476833,1.731704){\circle*{0.001}}
\put(0.479547,1.734301){\circle*{0.001}}
\put(0.482259,1.736885){\circle*{0.001}}
\put(0.484967,1.739458){\circle*{0.001}}
\put(0.487672,1.742018){\circle*{0.001}}
\put(0.490374,1.744566){\circle*{0.001}}
\put(0.493072,1.747103){\circle*{0.001}}
\put(0.495768,1.749627){\circle*{0.001}}
\put(0.49846,1.75214){\circle*{0.001}}
\put(0.501149,1.754641){\circle*{0.001}}
\put(0.503834,1.75713){\circle*{0.001}}
\put(0.506515,1.759607){\circle*{0.001}}
\put(0.509193,1.762072){\circle*{0.001}}
\put(0.511868,1.764526){\circle*{0.001}}
\put(0.514538,1.766968){\circle*{0.001}}
\put(0.517205,1.769399){\circle*{0.001}}
\put(0.519869,1.771818){\circle*{0.001}}
\put(0.522528,1.774225){\circle*{0.001}}
\put(0.525184,1.776621){\circle*{0.001}}
\put(0.527835,1.779006){\circle*{0.001}}
\put(0.530483,1.781379){\circle*{0.001}}
\put(0.533126,1.783741){\circle*{0.001}}
\put(0.535766,1.786091){\circle*{0.001}}
\put(0.538401,1.78843){\circle*{0.001}}
\put(0.541032,1.790758){\circle*{0.001}}
\put(0.543659,1.793075){\circle*{0.001}}
\put(0.546282,1.795381){\circle*{0.001}}
\put(0.5489,1.797675){\circle*{0.001}}
\put(0.551515,1.799959){\circle*{0.001}}
\put(0.554124,1.802231){\circle*{0.001}}
\put(0.55673,1.804492){\circle*{0.001}}
\put(0.559331,1.806743){\circle*{0.001}}
\put(0.561927,1.808983){\circle*{0.001}}
\put(0.564519,1.811211){\circle*{0.001}}
\put(0.567107,1.813429){\circle*{0.001}}
\put(0.56969,1.815636){\circle*{0.001}}
\put(0.572268,1.817833){\circle*{0.001}}
\put(0.574841,1.820018){\circle*{0.001}}
\put(0.57741,1.822193){\circle*{0.001}}
\put(0.579974,1.824358){\circle*{0.001}}
\put(0.582534,1.826512){\circle*{0.001}}
\put(0.585089,1.828655){\circle*{0.001}}
\put(0.587638,1.830788){\circle*{0.001}}
\put(0.590184,1.832911){\circle*{0.001}}
\put(0.592724,1.835023){\circle*{0.001}}
\put(0.595259,1.837124){\circle*{0.001}}
\put(0.59779,1.839216){\circle*{0.001}}
\put(0.600315,1.841297){\circle*{0.001}}
\put(0.602836,1.843368){\circle*{0.001}}
\put(0.605351,1.845429){\circle*{0.001}}
\put(0.607862,1.84748){\circle*{0.001}}
\put(0.610367,1.84952){\circle*{0.001}}
\put(0.612868,1.851551){\circle*{0.001}}
\put(0.615363,1.853571){\circle*{0.001}}
\put(0.617853,1.855582){\circle*{0.001}}
\put(0.620338,1.857583){\circle*{0.001}}
\put(0.622819,1.859574){\circle*{0.001}}
\put(0.625293,1.861555){\circle*{0.001}}
\put(0.627763,1.863526){\circle*{0.001}}
\put(0.630228,1.865488){\circle*{0.001}}
\put(0.632687,1.86744){\circle*{0.001}}
\put(0.635141,1.869382){\circle*{0.001}}
\put(0.63759,1.871314){\circle*{0.001}}
\put(0.640033,1.873237){\circle*{0.001}}
\put(0.642472,1.875151){\circle*{0.001}}
\put(0.644905,1.877055){\circle*{0.001}}
\put(0.647332,1.87895){\circle*{0.001}}
\put(0.649755,1.880835){\circle*{0.001}}
\put(0.652172,1.882711){\circle*{0.001}}
\put(0.654584,1.884577){\circle*{0.001}}
\put(0.65699,1.886435){\circle*{0.001}}
\put(0.659391,1.888283){\circle*{0.001}}
\put(0.661786,1.890122){\circle*{0.001}}
\put(0.664177,1.891952){\circle*{0.001}}
\put(0.666561,1.893772){\circle*{0.001}}
\put(0.668941,1.895584){\circle*{0.001}}
\put(0.671315,1.897387){\circle*{0.001}}
\put(0.673683,1.89918){\circle*{0.001}}
\put(0.676046,1.900965){\circle*{0.001}}
\put(0.678404,1.902741){\circle*{0.001}}
\put(0.680756,1.904508){\circle*{0.001}}
\put(0.683102,1.906267){\circle*{0.001}}
\put(0.685444,1.908016){\circle*{0.001}}
\put(0.687779,1.909757){\circle*{0.001}}
\put(0.690109,1.911489){\circle*{0.001}}
\put(0.692434,1.913213){\circle*{0.001}}
\put(0.694753,1.914928){\circle*{0.001}}
\put(0.697067,1.916634){\circle*{0.001}}
\put(0.699375,1.918332){\circle*{0.001}}
\put(0.701678,1.920022){\circle*{0.001}}
\put(0.703975,1.921703){\circle*{0.001}}
\put(0.706266,1.923375){\circle*{0.001}}
\put(0.708552,1.92504){\circle*{0.001}}
\put(0.710833,1.926696){\circle*{0.001}}
\put(0.713108,1.928343){\circle*{0.001}}
\put(0.715377,1.929983){\circle*{0.001}}
\put(0.717641,1.931614){\circle*{0.001}}
\put(0.7199,1.933238){\circle*{0.001}}
\put(0.722152,1.934853){\circle*{0.001}}
\put(0.7244,1.93646){\circle*{0.001}}
\put(0.726641,1.938059){\circle*{0.001}}
\put(0.728877,1.93965){\circle*{0.001}}
\put(0.731108,1.941233){\circle*{0.001}}
\put(0.733333,1.942809){\circle*{0.001}}
\put(0.06669,0.99582){\circle*{0.001}}
\put(0.06675,0.99164){\circle*{0.001}}
\put(0.06685,0.98748){\circle*{0.001}}
\put(0.06699,0.98332){\circle*{0.001}}
\put(0.06717,0.97918){\circle*{0.001}}
\put(0.06739,0.97504){\circle*{0.001}}
\put(0.06765,0.97092){\circle*{0.001}}
\put(0.06796,0.96681){\circle*{0.001}}
\put(0.0683,0.96271){\circle*{0.001}}
\put(0.06869,0.95863){\circle*{0.001}}
\put(0.06912,0.95455){\circle*{0.001}}
\put(0.06959,0.95049){\circle*{0.001}}
\put(0.0701,0.94645){\circle*{0.001}}
\put(0.07066,0.94241){\circle*{0.001}}
\put(0.07126,0.93839){\circle*{0.001}}
\put(0.0719,0.93439){\circle*{0.001}}
\put(0.07258,0.9304){\circle*{0.001}}
\put(0.07331,0.92643){\circle*{0.001}}
\put(0.07409,0.92247){\circle*{0.001}}
\put(0.0749,0.91853){\circle*{0.001}}
\put(0.07576,0.9146){\circle*{0.001}}
\put(0.07667,0.91069){\circle*{0.001}}
\put(0.07762,0.9068){\circle*{0.001}}
\put(0.07861,0.90293){\circle*{0.001}}
\put(0.07965,0.89907){\circle*{0.001}}
\put(0.08073,0.89523){\circle*{0.001}}
\put(0.08186,0.89141){\circle*{0.001}}
\put(0.08303,0.88761){\circle*{0.001}}
\put(0.08425,0.88383){\circle*{0.001}}
\put(0.08552,0.88007){\circle*{0.001}}
\put(0.08683,0.87633){\circle*{0.001}}
\put(0.08818,0.87261){\circle*{0.001}}
\put(0.08959,0.86891){\circle*{0.001}}
\put(0.09103,0.86523){\circle*{0.001}}
\put(0.09253,0.86157){\circle*{0.001}}
\put(0.09407,0.85794){\circle*{0.001}}
\put(0.09566,0.85432){\circle*{0.001}}
\put(0.09729,0.85073){\circle*{0.001}}
\put(0.09897,0.84717){\circle*{0.001}}
\put(0.1007,0.84362){\circle*{0.001}}
\put(0.10247,0.8401){\circle*{0.001}}
\put(0.10429,0.8366){\circle*{0.001}}
\put(0.10616,0.83313){\circle*{0.001}}
\put(0.10807,0.82968){\circle*{0.001}}
\put(0.11003,0.82626){\circle*{0.001}}
\put(0.11204,0.82287){\circle*{0.001}}
\put(0.1141,0.81949){\circle*{0.001}}
\put(0.1162,0.81615){\circle*{0.001}}
\put(0.11835,0.81283){\circle*{0.001}}
\put(0.12054,0.80954){\circle*{0.001}}
\put(0.12279,0.80627){\circle*{0.001}}
\put(0.12508,0.80304){\circle*{0.001}}
\put(0.12742,0.79983){\circle*{0.001}}
\put(0.1298,0.79665){\circle*{0.001}}
\put(0.13223,0.79349){\circle*{0.001}}
\put(0.13471,0.79037){\circle*{0.001}}
\put(0.13724,0.78727){\circle*{0.001}}
\put(0.13981,0.78421){\circle*{0.001}}
\put(0.14244,0.78117){\circle*{0.001}}
\put(0.1451,0.77817){\circle*{0.001}}
\put(0.14782,0.77519){\circle*{0.001}}
\put(0.15058,0.77225){\circle*{0.001}}
\put(0.15339,0.76933){\circle*{0.001}}
\put(0.15624,0.76645){\circle*{0.001}}
\put(0.15914,0.7636){\circle*{0.001}}
\put(0.16209,0.76078){\circle*{0.001}}
\put(0.16509,0.758){\circle*{0.001}}
\put(0.16813,0.75524){\circle*{0.001}}
\put(0.17122,0.75252){\circle*{0.001}}
\put(0.17435,0.74983){\circle*{0.001}}
\put(0.17753,0.74718){\circle*{0.001}}
\put(0.18075,0.74456){\circle*{0.001}}
\put(0.18402,0.74197){\circle*{0.001}}
\put(0.18734,0.73941){\circle*{0.001}}
\put(0.1907,0.7369){\circle*{0.001}}
\put(0.19411,0.73441){\circle*{0.001}}
\put(0.19756,0.73196){\circle*{0.001}}
\put(0.20106,0.72955){\circle*{0.001}}
\put(0.2046,0.72717){\circle*{0.001}}
\put(0.20819,0.72482){\circle*{0.001}}
\put(0.21182,0.72252){\circle*{0.001}}
\put(0.21549,0.72024){\circle*{0.001}}
\put(0.21921,0.71801){\circle*{0.001}}
\put(0.22297,0.71581){\circle*{0.001}}
\put(0.22678,0.71365){\circle*{0.001}}
\put(0.23063,0.71152){\circle*{0.001}}
\put(0.23452,0.70943){\circle*{0.001}}
\put(0.23845,0.70738){\circle*{0.001}}
\put(0.24243,0.70537){\circle*{0.001}}
\put(0.24645,0.70339){\circle*{0.001}}
\put(0.25051,0.70145){\circle*{0.001}}
\put(0.25461,0.69955){\circle*{0.001}}
\put(0.25875,0.69769){\circle*{0.001}}
\put(0.26294,0.69587){\circle*{0.001}}
\put(0.26716,0.69408){\circle*{0.001}}
\put(0.27143,0.69233){\circle*{0.001}}
\put(0.27574,0.69063){\circle*{0.001}}
\put(0.28008,0.68896){\circle*{0.001}}
\put(0.28447,0.68733){\circle*{0.001}}
\put(0.28889,0.68574){\circle*{0.001}} \put(0.4,1){\circle*{0.001}}
\put(0.29336,0.68418){\circle*{0.001}}
\put(0.29786,0.68267){\circle*{0.001}}
\put(0.3024,0.6812){\circle*{0.001}}
\put(0.30698,0.67977){\circle*{0.001}}
\put(0.3116,0.67837){\circle*{0.001}}
\put(0.31625,0.67702){\circle*{0.001}}
\put(0.32094,0.6757){\circle*{0.001}}
\put(0.32566,0.67443){\circle*{0.001}}
\put(0.33043,0.6732){\circle*{0.001}}
\put(0.33522,0.672){\circle*{0.001}}
\put(0.34006,0.67085){\circle*{0.001}}
\put(0.34493,0.66974){\circle*{0.001}}
\put(0.34983,0.66866){\circle*{0.001}}
\put(0.35477,0.66763){\circle*{0.001}}
\put(0.35974,0.66664){\circle*{0.001}}
\put(0.36474,0.66569){\circle*{0.001}}
\put(0.36978,0.66477){\circle*{0.001}}
\put(0.37485,0.6639){\circle*{0.001}}
\put(0.37995,0.66307){\circle*{0.001}}
\put(0.38508,0.66228){\circle*{0.001}}
\put(0.39024,0.66153){\circle*{0.001}}
\put(0.39544,0.66082){\circle*{0.001}}
\put(0.400658,0.66015){\circle*{0.001}}
\put(0.405912,0.65953){\circle*{0.001}}
\put(0.411197,0.65894){\circle*{0.001}}
\put(0.41651,0.65839){\circle*{0.001}}
\put(0.421852,0.65788){\circle*{0.001}}
\put(0.427222,0.65741){\circle*{0.001}}
\put(0.432619,0.65699){\circle*{0.001}}
\put(0.438044,0.6566){\circle*{0.001}}
\put(0.443495,0.65625){\circle*{0.001}}
\put(0.448973,0.65595){\circle*{0.001}}
\put(0.454476,0.65568){\circle*{0.001}}
\put(0.460005,0.65545){\circle*{0.001}}
\put(0.465558,0.65527){\circle*{0.001}}
\put(0.471136,0.65512){\circle*{0.001}}
\put(0.476738,0.65501){\circle*{0.001}}
\put(0.482363,0.65494){\circle*{0.001}}
\put(0.488011,0.65491){\circle*{0.001}}
\put(0.493681,0.65492){\circle*{0.001}}
\put(0.499374,0.65497){\circle*{0.001}}
\put(0.505088,0.65506){\circle*{0.001}}
\put(0.510823,0.65519){\circle*{0.001}}
\put(0.516578,0.65536){\circle*{0.001}}
\put(0.522354,0.65556){\circle*{0.001}}
\put(0.528149,0.65581){\circle*{0.001}}
\put(0.533963,0.65609){\circle*{0.001}}
\put(0.539796,0.65641){\circle*{0.001}}
\put(0.545646,0.65677){\circle*{0.001}}
\put(0.551515,0.65717){\circle*{0.001}}
\put(0.5574,0.6576){\circle*{0.001}}
\put(0.563302,0.65807){\circle*{0.001}}
\put(0.56922,0.65858){\circle*{0.001}}
\put(0.575154,0.65913){\circle*{0.001}}
\put(0.581103,0.65971){\circle*{0.001}}
\put(0.587066,0.66033){\circle*{0.001}}
\put(0.593044,0.66099){\circle*{0.001}}
\put(0.599036,0.66168){\circle*{0.001}}
\put(0.60504,0.66241){\circle*{0.001}}
\put(0.611057,0.66318){\circle*{0.001}}
\put(0.617087,0.66398){\circle*{0.001}}
\put(0.623128,0.66482){\circle*{0.001}}
\put(0.629181,0.66569){\circle*{0.001}}
\put(0.635244,0.6666){\circle*{0.001}}
\put(0.641317,0.66755){\circle*{0.001}}
\put(0.647401,0.66852){\circle*{0.001}}
\put(0.653493,0.66954){\circle*{0.001}}
\put(0.659595,0.67058){\circle*{0.001}}
\put(0.665705,0.67166){\circle*{0.001}}
\put(0.671823,0.67278){\circle*{0.001}}
\put(0.677948,0.67393){\circle*{0.001}}
\put(0.68408,0.67511){\circle*{0.001}}
\put(0.690219,0.67633){\circle*{0.001}}
\put(0.696363,0.67757){\circle*{0.001}}
\put(0.702514,0.67885){\circle*{0.001}}
\put(0.708669,0.68017){\circle*{0.001}}
\put(0.714829,0.68151){\circle*{0.001}}
\put(0.720993,0.68289){\circle*{0.001}}
\put(0.727161,0.6843){\circle*{0.001}}
\put(0.733333,0.68574){\circle*{0.001}}
\put(0.739507,0.68721){\circle*{0.001}}
\put(0.745684,0.68871){\circle*{0.001}}
\put(0.751862,0.69024){\circle*{0.001}}
\put(0.758042,0.6918){\circle*{0.001}}
\put(0.764224,0.69339){\circle*{0.001}}
\put(0.770405,0.69501){\circle*{0.001}}
\put(0.776588,0.69666){\circle*{0.001}}
\put(0.78277,0.69834){\circle*{0.001}}
\put(0.788951,0.70005){\circle*{0.001}}
\put(0.795131,0.70179){\circle*{0.001}}
\put(0.80131,0.70355){\circle*{0.001}}
\put(0.807487,0.70535){\circle*{0.001}}
\put(0.813662,0.70717){\circle*{0.001}}
\put(0.819834,0.70901){\circle*{0.001}}
\put(0.826003,0.71089){\circle*{0.001}}
\put(0.832169,0.71279){\circle*{0.001}}
\put(0.838331,0.71472){\circle*{0.001}}
\put(0.844489,0.71667){\circle*{0.001}}
\put(0.850642,0.71865){\circle*{0.001}}
\put(0.85679,0.72065){\circle*{0.001}}
\put(0.862932,0.72268){\circle*{0.001}}
\put(0.869069,0.72474){\circle*{0.001}}
\put(0.8752,0.72682){\circle*{0.001}}
\put(0.881324,0.72892){\circle*{0.001}}
\put(0.887442,0.73105){\circle*{0.001}}
\put(0.893552,0.7332){\circle*{0.001}}
\put(0.899655,0.73538){\circle*{0.001}}
\put(0.90575,0.73757){\circle*{0.001}}
\put(0.911837,0.73979){\circle*{0.001}}
\put(0.917915,0.74204){\circle*{0.001}}
\put(0.923984,0.7443){\circle*{0.001}}
\put(0.930044,0.74659){\circle*{0.001}}
\put(0.936095,0.7489){\circle*{0.001}}
\put(0.942135,0.75123){\circle*{0.001}}
\put(0.948166,0.75358){\circle*{0.001}}
\put(0.954185,0.75595){\circle*{0.001}}
\put(0.960195,0.75834){\circle*{0.001}}
\put(0.966192,0.76075){\circle*{0.001}}
\put(0.972179,0.76318){\circle*{0.001}}
\put(0.978154,0.76563){\circle*{0.001}}
\put(0.984117,0.7681){\circle*{0.001}}
\put(0.990067,0.77059){\circle*{0.001}}
\put(0.996005,0.7731){\circle*{0.001}}
\put(1.001931,0.77562){\circle*{0.001}}
\put(1.007843,0.77817){\circle*{0.001}}
\put(1.013741,0.78073){\circle*{0.001}}
\put(1.019626,0.78331){\circle*{0.001}}
\put(1.025498,0.7859){\circle*{0.001}}
\put(1.031354,0.78851){\circle*{0.001}}
\put(1.037197,0.79114){\circle*{0.001}}
\put(1.043025,0.79378){\circle*{0.001}}
\put(1.048838,0.79644){\circle*{0.001}}
\put(1.054636,0.79912){\circle*{0.001}}
\put(1.060418,0.80181){\circle*{0.001}}
\put(1.066185,0.80452){\circle*{0.001}}
\put(1.071937,0.80724){\circle*{0.001}}
\put(1.077672,0.80997){\circle*{0.001}}
\put(1.08339,0.81272){\circle*{0.001}}
\put(1.089093,0.81548){\circle*{0.001}}
\put(1.094779,0.81826){\circle*{0.001}}
\put(1.100447,0.82105){\circle*{0.001}}
\put(1.106099,0.82385){\circle*{0.001}}
\put(1.111734,0.82667){\circle*{0.001}}
\put(1.117351,0.82949){\circle*{0.001}}
\put(1.12295,0.83233){\circle*{0.001}}
\put(1.128532,0.83518){\circle*{0.001}}
\put(1.134096,0.83805){\circle*{0.001}}
\put(1.139641,0.84092){\circle*{0.001}}
\put(1.145168,0.84381){\circle*{0.001}}
\put(1.150677,0.8467){\circle*{0.001}}
\put(1.156167,0.84961){\circle*{0.001}}
\put(1.161638,0.85253){\circle*{0.001}}
\put(1.16709,0.85545){\circle*{0.001}}
\put(1.172524,0.85839){\circle*{0.001}}
\put(1.177937,0.86134){\circle*{0.001}}
\put(1.183332,0.86429){\circle*{0.001}}
\put(1.188707,0.86726){\circle*{0.001}}
\put(1.194062,0.87023){\circle*{0.001}}
\put(1.199398,0.87321){\circle*{0.001}}
\put(1.204713,0.8762){\circle*{0.001}}
\put(1.210009,0.8792){\circle*{0.001}}
\put(1.215284,0.88221){\circle*{0.001}}
\put(1.220539,0.88522){\circle*{0.001}}
\put(1.225774,0.88824){\circle*{0.001}}
\put(1.230988,0.89127){\circle*{0.001}}
\put(1.236182,0.8943){\circle*{0.001}}
\put(1.241355,0.89734){\circle*{0.001}}
\put(1.246507,0.90039){\circle*{0.001}}
\put(1.251638,0.90344){\circle*{0.001}}
\put(1.256749,0.9065){\circle*{0.001}}
\put(1.261838,0.90957){\circle*{0.001}}
\put(1.266906,0.91264){\circle*{0.001}}
\put(1.271953,0.91571){\circle*{0.001}}
\put(1.276979,0.9188){\circle*{0.001}}
\put(1.281983,0.92188){\circle*{0.001}}
\put(1.286966,0.92497){\circle*{0.001}}
\put(1.291927,0.92807){\circle*{0.001}}
\put(1.296867,0.93117){\circle*{0.001}}
\put(1.301785,0.93427){\circle*{0.001}}
\put(1.306682,0.93737){\circle*{0.001}}
\put(1.311557,0.94048){\circle*{0.001}}
\put(1.31641,0.9436){\circle*{0.001}}
\put(1.321241,0.94672){\circle*{0.001}}
\put(1.32605,0.94983){\circle*{0.001}}
\put(1.330838,0.95296){\circle*{0.001}}
\put(1.335603,0.95608){\circle*{0.001}}
\put(1.340347,0.95921){\circle*{0.001}}
\put(1.345068,0.96234){\circle*{0.001}}
\put(1.349768,0.96547){\circle*{0.001}}
\put(1.354445,0.96861){\circle*{0.001}}
\put(1.3591,0.97174){\circle*{0.001}}
\put(1.363733,0.97488){\circle*{0.001}}
\put(1.368344,0.97802){\circle*{0.001}}
\put(1.372933,0.98115){\circle*{0.001}}
\put(1.3775,0.98429){\circle*{0.001}}
\put(1.382044,0.98744){\circle*{0.001}}
\put(1.386566,0.99058){\circle*{0.001}}
\put(1.391066,0.99372){\circle*{0.001}}
\put(1.395544,0.99686){\circle*{0.001}}
\put(1.399999,1){\circle*{0.001}}
\put(0.733333,1.942809){\circle*{0.001}}
\put(0.723317,1.942761){\circle*{0.001}}
\put(0.71327,1.942618){\circle*{0.001}}
\put(0.703194,1.942378){\circle*{0.001}}
\put(0.693094,1.94204){\circle*{0.001}}
\put(0.68297,1.941601){\circle*{0.001}}
\put(0.672827,1.941062){\circle*{0.001}}
\put(0.662666,1.940421){\circle*{0.001}}
\put(0.652491,1.939676){\circle*{0.001}}
\put(0.642305,1.938827){\circle*{0.001}}
\put(0.632111,1.937872){\circle*{0.001}}
\put(0.621912,1.936811){\circle*{0.001}}
\put(0.611711,1.935642){\circle*{0.001}}
\put(0.601512,1.934364){\circle*{0.001}}
\put(0.591317,1.932977){\circle*{0.001}}
\put(0.58113,1.93148){\circle*{0.001}}
\put(0.570954,1.929871){\circle*{0.001}}
\put(0.560794,1.928151){\circle*{0.001}}
\put(0.550651,1.926318){\circle*{0.001}}
\put(0.54053,1.924372){\circle*{0.001}}
\put(0.530434,1.922313){\circle*{0.001}}
\put(0.520367,1.920139){\circle*{0.001}}
\put(0.510333,1.917851){\circle*{0.001}}
\put(0.500334,1.915449){\circle*{0.001}}
\put(0.490375,1.912931){\circle*{0.001}}
\put(0.480459,1.910298){\circle*{0.001}}
\put(0.47059,1.90755){\circle*{0.001}}
\put(0.460772,1.904686){\circle*{0.001}}
\put(0.451008,1.901707){\circle*{0.001}}
\put(0.441302,1.898614){\circle*{0.001}}
\put(0.431657,1.895405){\circle*{0.001}}
\put(0.422077,1.892081){\circle*{0.001}}
\put(0.412567,1.888644){\circle*{0.001}}
\put(0.403129,1.885092){\circle*{0.001}}
\put(0.39377,1.881427){\circle*{0.001}}
\put(0.38449,1.87765){\circle*{0.001}}
\put(0.37529,1.87376){\circle*{0.001}}
\put(0.36618,1.869759){\circle*{0.001}}
\put(0.35716,1.865648){\circle*{0.001}}
\put(0.34823,1.861428){\circle*{0.001}}
\put(0.3394,1.857099){\circle*{0.001}}
\put(0.33067,1.852662){\circle*{0.001}}
\put(0.32204,1.84812){\circle*{0.001}}
\put(0.31352,1.843472){\circle*{0.001}}
\put(0.30511,1.838721){\circle*{0.001}}
\put(0.29682,1.833868){\circle*{0.001}}
\put(0.28864,1.828914){\circle*{0.001}}
\put(0.28058,1.82386){\circle*{0.001}}
\put(0.27264,1.818709){\circle*{0.001}}
\put(0.26483,1.813463){\circle*{0.001}}
\put(0.25715,1.808122){\circle*{0.001}}
\put(0.24959,1.802688){\circle*{0.001}}
\put(0.24217,1.797164){\circle*{0.001}}
\put(0.23489,1.791552){\circle*{0.001}}
\put(0.22774,1.785853){\circle*{0.001}}
\put(0.22074,1.78007){\circle*{0.001}}
\put(0.21387,1.774204){\circle*{0.001}}
\put(0.20715,1.768258){\circle*{0.001}}
\put(0.20058,1.762234){\circle*{0.001}}
\put(0.19416,1.756134){\circle*{0.001}}
\put(0.18788,1.749961){\circle*{0.001}}
\put(0.18176,1.743717){\circle*{0.001}}
\put(0.17579,1.737404){\circle*{0.001}}
\put(0.16998,1.731025){\circle*{0.001}}
\put(0.16432,1.724581){\circle*{0.001}}
\put(0.15882,1.718077){\circle*{0.001}}
\put(0.15348,1.711513){\circle*{0.001}}
\put(0.1483,1.704894){\circle*{0.001}}
\put(0.14327,1.69822){\circle*{0.001}}
\put(0.13841,1.691495){\circle*{0.001}}
\put(0.13371,1.684721){\circle*{0.001}}
\put(0.12917,1.677901){\circle*{0.001}}
\put(0.1248,1.671038){\circle*{0.001}}
\put(0.12058,1.664134){\circle*{0.001}}
\put(0.11654,1.657191){\circle*{0.001}}
\put(0.11265,1.650213){\circle*{0.001}}
\put(0.10893,1.643201){\circle*{0.001}}
\put(0.10536,1.636159){\circle*{0.001}}
\put(0.10197,1.629089){\circle*{0.001}}
\put(0.09873,1.621993){\circle*{0.001}}
\put(0.09566,1.614875){\circle*{0.001}}
\put(0.09274,1.607736){\circle*{0.001}}
\put(0.08999,1.60058){\circle*{0.001}}
\put(0.0874,1.593408){\circle*{0.001}}
\put(0.08497,1.586223){\circle*{0.001}}
\put(0.08269,1.579028){\circle*{0.001}}
\put(0.08058,1.571825){\circle*{0.001}}
\put(0.07861,1.564616){\circle*{0.001}}
\put(0.07681,1.557403){\circle*{0.001}}
\put(0.07515,1.55019){\circle*{0.001}}
\put(0.07365,1.542979){\circle*{0.001}}
\put(0.0723,1.53577){\circle*{0.001}}
\put(0.0711,1.528568){\circle*{0.001}}
\put(0.07005,1.521373){\circle*{0.001}}
\put(0.06914,1.514188){\circle*{0.001}}
\put(0.06838,1.507016){\circle*{0.001}}
\put(0.06776,1.499857){\circle*{0.001}}
\put(0.06728,1.492715){\circle*{0.001}}
\put(0.06694,1.485591){\circle*{0.001}}
\put(0.06674,1.478487){\circle*{0.001}}
\put(0.06667,1.471404){\circle*{0.001}}

\end{picture}
\end{center}
\caption{This picture shows the regions for states and entanglement witnesses
on the two dimensional section through $\varrho_{E_1}$, $\varrho_{E_2}$ and $\varrho_*$ in the two qubit system.
There exist entanglement witnesses outside of the face $F_{E_2^\perp}$ since
$E_2$ has no product vector, but there exists no entanglement witnesses outside of $F_{E_1^\perp}$.
%Two convex sets are locally distinguishable around $\varrho_{E_2}$ by principal supporting hyperplanes,
%but this is not the case for $\varrho_{E_1}$.
This is the real picture with respect to the
Hilbert-Schmidt norm.}
\end{figure}

First of all, we note that a two qubit Hermitian matrix is an entanglement witness if and only if
it is a Schmidt number $2$ witness if and only if it belongs to $\blockpos_1\setminus\blockpos_2$.
There exist entanglement witnesses outside of the face $F_{E_2^\perp}$ since there is  no product vector in $E_2$,
that is,
$$
S^\prime(E_2,1)=\emptyset.
$$
On the other hand, there exists no entanglement witness outside of
the faces $F_{E_1}$, $F_{E_1^\perp}$ or $F_{E_2}$, because
$S^\prime(E_1^\perp,1)$, $S^\prime(E_1,1)$ and $S^\prime(E_2^\perp,1)$ are
nonempty. In fact, we have
$$
\dim S(E_1^\perp,1)=2,\qquad
\# S^\prime (E_1,1)=1,\qquad
\dim S(E_2^\perp,1)=2,
$$
where $\# X$ denotes the cardinality of \emph{rays} in $X$.

We note that one dimensional subspaces in $\mathbb C^2\ot\mathbb C^2$ form the three dimensional complex projective space,
and subspaces spanned by Schmidt rank one vectors form a quadratic surface, which give rise to non-generic cases.
Note that $E_2$ is a generic case, but $E_1$ is non-generic.
By {\sc Figure 4}, we note that $\blockpos_1\supsetneqq\blockpos_2$ is locally distinguishable around
$\varrho_{E_1}$, $\varrho_{E_1^\perp}$ and $\varrho_{E_2}$, but
they cannot be distinguished by principal supporting hyperplanes around $\varrho_{E_1^\perp}$
because
$$
S^\prime(E_1,2)=\emptyset.
$$
It is locally distinguishable around $\varrho_{E_1}$ and $\varrho_{E_2}$ by principal supporting hyperplanes, since
$S^\prime(E_i^\perp,2)$ is nonempty for $i=1,2$.
In fact, we have
$$
\dim S(E_i^\perp,2)=3,\qquad i=1,2.
$$
We can see in {\sc Figure 4} that $\blockpos_1\supsetneqq\blockpos_2$ is locally distinguishable around $\varrho_{E_2}$
by principal supporting hyperplane, but we cannot see it around $\varrho_{E_1}$ in this two dimensional section.
\ $\Box$
\medskip

We turn our attention to the bi-qutrit case of $m=n=3$. In this case, we have
\begin{itemize}
\item
$\kappa_{\dim E}=3$ for $\dim E=1$,
\item
$\kappa_{\dim E}=2$ for $\dim E=2,3,4$,
\item
$\kappa_{\dim E}=1$ for $\dim E=5,6,7,8$.
\end{itemize}
For a generic $2$ dimensional subspace $E$, we have
$$
S(E,1)=\{0\},\qquad \# S^\prime (E,2)= 3,\qquad \dim S(E,3)=2.
$$
For a generic $5$ dimensional subspace $E$, we also have
$$
\# S^\prime (E,1)=6,\qquad \dim S(E,2)= 4,\qquad \dim S(E,3)=5.
$$
We provide both generic and non-generic two dimensional subspaces
in the $3\ot 3$ system. It may happen that $S^\prime(E,2)$ is empty even though
$S^\prime(E,\ell)$ is nonempty for $\ell=1,3$.

\medskip
\noindent
{\bf Example 2}.
We revisit examples in \cite{han_kye_2025_a}, which were main motivations for this paper, as they are shown in {\sc Figure 5}.
We take
$$
|\xi_1\ran=|01\ran,\qquad
|\xi_2\ran = {1\over \sqrt{2}}(|01\ran+|10\ran),
\qquad
|\xi_3\ran = {1\over \sqrt{3}}(|00\ran+|11\ran+|22\ran)
$$
in $\mathbb C^3\ot\mathbb C^3$,
and consider the two dimensional subspace $E_i$ spanned by $|\xi_3\ran$ and $|\xi_i\ran$
for $i=1,2$. See {\sc Figure 5}.

\begin{figure}
\begin{center}
\setlength{\unitlength}{2.5 truecm}
\begin{picture}(2.5,1.5)
\thicklines
\put(0,1.41421){\circle*{0.03}}%A
\put(0,0){\circle*{0.03}}%B
\put(0.80178,0.70710){\circle*{0.03}}%C
\put(0.62360,0.70710){\circle*{0.03}}%O
\put(0.70156,0.61871){\circle*{0.03}}%E
\put(0.70156,0.79549){\circle*{0.03}}%F
\put(0.93541,0.35355){\circle*{0.03}}%I
\put(0.74833,0.56568){\circle*{0.03}}%J

\dottedline{0.005}(0,1.41421356237309)(0,0)
\dottedline{0.005}(0,0)(0.801783725737273,0.707106781186547)
\dottedline{0.005}(0.801783725737273,0.707106781186547)(0,1.41421356237309)
%\dottedline{0.05}(0,0)(0.801783725737273,0.90913729009699)
%\dottedline{0.05}(0,0.707106781186547)(1.12249721603218,0.707106781186547)
\dottedline{0.05}(0,1.41421356237309)(0.935414346693485,0.353553390593274)
%\dottedline{0.005}(0,1.41421356237309)(0.801783725737273,0.90913729009699)
%\dottedline{0.005}(0.801783725737273,0.90913729009699)(0.935414346693485,0.353553390593274)
\dottedline{0.005}(0.935414346693485,0.353553390593274)(0,0)
\dottedline{0.005}(0,0)(0.748331477354788,0.565685424949238)

\dottedline{0.05}(0,0.71)(0.70156,0.71)
\dottedline{0.05}(1.5,0.71)(2.3,0.71)
\dottedline{0.05}(0,0)(0.70156,0.79549)
\dottedline{0.05}(1.5,0)(2.30178,0.90913)

\put(-0.15,1,4){$\varrho_3$}
\put(-0.15,0){$\varrho_1$}
\put(0.45,0.67){$\varrho_*$}
\put(0.81,0.77){$\varrho_{E_1^\perp}$}
\put(-0.22,0.77){$\varrho_{E_1}$}
\put(0,0.70710){\circle*{0.03}}%\varrho_{E_1}

\put(1.5,1.41421){\circle*{0.03}}%A
\put(1.5,0){\circle*{0.03}}%B
\put(2.30178,0.70710){\circle*{0.03}}%C
\put(2.12360,0.70710){\circle*{0.03}}%O
%\put(1.5,0.70710){\circle*{0.03}}%D
\put(2.20156,0.61871){\circle*{0.03}}%E
\put(2.20156,0.79549){\circle*{0.03}}%F
\put(2.30178,0.90913){\circle*{0.03}}%G
\put(2.43541,0.35355){\circle*{0.03}}%I
\put(2.24833,0.56568){\circle*{0.03}}%J

\dottedline{0.005}(1.5,1.41421356237309)(1.5,0)
\dottedline{0.005}(1.5,0)(2.30178372573727,0.707106781186547)
\dottedline{0.005}(2.30178372573727,0.707106781186547)(1.5,1.41421356237309)
%\dottedline{0.05}(1.5,0)(2.30178372573727,0.90913729009699)
%\dottedline{0.05}(1.5,0.707106781186547)(2.62249721603218,0.707106781186547)
\dottedline{0.05}(1.5,1.41421356237309)(2.43541434669349,0.353553390593274)

\dottedline{0.005}(1.5,1.41421356237309)(2.30178372573727,0.90913729009699)
\dottedline{0.005}(2.30178372573727,0.90913729009699)(2.43541434669349,0.353553390593274)
\dottedline{0.005}(2.43541434669349,0.353553390593274)(1.5,0)
\dottedline{0.005}(1.5,0)(2.24833147735479,0.565685424949238)

\put(1.34,1,4){$\varrho_3$}
\put(1.34,0){$\varrho_2$}
\put(1.96,0.67){$\varrho_*$}
\put(2.31,0.77){$\varrho_{E_2^\perp}$}
\put(1.26,0.77){$\varrho_{E_2}$}
\put(1.5,0.70710){\circle*{0.03}}%\varrho_{E_2}

%3x3_sn=1%BP1
\put(0.80178,0.70710){\circle*{0.001}}
\put(0.80648,0.70295){\circle*{0.001}}
\put(0.81144,0.69858){\circle*{0.001}}
\put(0.81667,0.69397){\circle*{0.001}}
\put(0.82221,0.68909){\circle*{0.001}}
\put(0.82807,0.68392){\circle*{0.001}}
\put(0.83428,0.67844){\circle*{0.001}}
\put(0.84089,0.67261){\circle*{0.001}}
\put(0.84792,0.66640){\circle*{0.001}}
\put(0.85543,0.65979){\circle*{0.001}}
\put(0.86345,0.65271){\circle*{0.001}}

\put(0.86345,0.65271){\circle*{0.001}}
\put(0.87147,0.64527){\circle*{0.001}}
\put(0.87893,0.63762){\circle*{0.001}}
\put(0.88586,0.62979){\circle*{0.001}}
\put(0.89228,0.62180){\circle*{0.001}}
\put(0.89821,0.61369){\circle*{0.001}}
\put(0.90367,0.60548){\circle*{0.001}}
\put(0.90869,0.59719){\circle*{0.001}}
\put(0.91329,0.58885){\circle*{0.001}}
\put(0.91749,0.58047){\circle*{0.001}}
\put(0.92131,0.57208){\circle*{0.001}}
\put(0.92477,0.56368){\circle*{0.001}}
\put(0.92789,0.55529){\circle*{0.001}}
\put(0.93069,0.54693){\circle*{0.001}}
\put(0.93319,0.53860){\circle*{0.001}}
\put(0.93541,0.53033){\circle*{0.001}}
\put(0.93736,0.52211){\circle*{0.001}}
\put(0.93905,0.51395){\circle*{0.001}}
\put(0.94051,0.50587){\circle*{0.001}}
\put(0.94175,0.49787){\circle*{0.001}}
\put(0.94277,0.48996){\circle*{0.001}}
\put(0.94361,0.48214){\circle*{0.001}}
\put(0.94426,0.47441){\circle*{0.001}}
\put(0.94473,0.46678){\circle*{0.001}}
\put(0.94505,0.45925){\circle*{0.001}}
\put(0.94522,0.45183){\circle*{0.001}}
\put(0.94525,0.44451){\circle*{0.001}}
\put(0.94515,0.43730){\circle*{0.001}}
\put(0.94493,0.43020){\circle*{0.001}}
\put(0.94459,0.42321){\circle*{0.001}}
\put(0.94415,0.41633){\circle*{0.001}}
\put(0.94361,0.40956){\circle*{0.001}}
\put(0.94299,0.40290){\circle*{0.001}}
\put(0.94227,0.39635){\circle*{0.001}}
\put(0.94148,0.38991){\circle*{0.001}}
\put(0.94062,0.38359){\circle*{0.001}}
\put(0.93969,0.37737){\circle*{0.001}}
\put(0.93870,0.37125){\circle*{0.001}}
\put(0.93765,0.36525){\circle*{0.001}}
\put(0.93656,0.35935){\circle*{0.001}}
\put(0.93541,0.35355){\circle*{0.001}}

%3x3_sn=1%BP2
\put(0.80178,0.70710){\circle*{0.001}}
\put(0.80852,0.69871){\circle*{0.001}}
\put(0.81167,0.69004){\circle*{0.001}}
\put(0.81249,0.68140){\circle*{0.001}}
\put(0.81177,0.67296){\circle*{0.001}}
\put(0.81000,0.66483){\circle*{0.001}}
\put(0.80754,0.65705){\circle*{0.001}}
\put(0.80461,0.64963){\circle*{0.001}}
\put(0.80139,0.64259){\circle*{0.001}}
\put(0.79799,0.63592){\circle*{0.001}}
\put(0.79449,0.62960){\circle*{0.001}}
\put(0.79095,0.62361){\circle*{0.001}}
\put(0.78742,0.61794){\circle*{0.001}}
\put(0.78392,0.61258){\circle*{0.001}}
\put(0.78048,0.60749){\circle*{0.001}}
\put(0.77711,0.60267){\circle*{0.001}}
\put(0.77382,0.59809){\circle*{0.001}}
\put(0.77062,0.59375){\circle*{0.001}}
\put(0.76751,0.58962){\circle*{0.001}}
\put(0.76449,0.58569){\circle*{0.001}}
\put(0.76157,0.58195){\circle*{0.001}}
\put(0.75874,0.57839){\circle*{0.001}}
\put(0.75601,0.57499){\circle*{0.001}}
\put(0.75336,0.57174){\circle*{0.001}}
\put(0.75080,0.56864){\circle*{0.001}}
\put(0.74833,0.56568){\circle*{0.001}}

%3x3_sn=2%BP2
\put(2.30178,0.70710){\circle*{0.001}}
\put(2.29990,0.69911){\circle*{0.001}}
\put(2.29792,0.69129){\circle*{0.001}}
\put(2.29587,0.68366){\circle*{0.001}}
\put(2.29375,0.67623){\circle*{0.001}}
\put(2.29158,0.66901){\circle*{0.001}}
\put(2.28937,0.66199){\circle*{0.001}}
\put(2.28712,0.65519){\circle*{0.001}}
\put(2.28484,0.64860){\circle*{0.001}}
\put(2.28255,0.64222){\circle*{0.001}}
\put(2.28026,0.63605){\circle*{0.001}}
\put(2.27796,0.63009){\circle*{0.001}}
\put(2.27568,0.62433){\circle*{0.001}}
\put(2.27340,0.61878){\circle*{0.001}}
\put(2.27114,0.61342){\circle*{0.001}}
\put(2.26891,0.60825){\circle*{0.001}}
\put(2.26670,0.60326){\circle*{0.001}}
\put(2.26451,0.59845){\circle*{0.001}}
\put(2.26236,0.59382){\circle*{0.001}}
\put(2.26025,0.58935){\circle*{0.001}}
\put(2.25817,0.58504){\circle*{0.001}}
\put(2.25612,0.58088){\circle*{0.001}}
\put(2.25412,0.57687){\circle*{0.001}}
\put(2.25215,0.57301){\circle*{0.001}}
\put(2.25022,0.56928){\circle*{0.001}}
\put(2.24833,0.56568){\circle*{0.001}}

\end{picture}
\end{center}
\caption{
These pictures for bi-qutrit system show the regions for states, Schmidt number $3$ witnesses and Schmidt number $2$ witnesses, on
the $2$-dimensional hyperplanes, where
$E_i$ is a two dimensional space spanned by two vectors of Schmidt rank $3$ and $i$, for $i=1,2$.
This is the real picture with respect to the Hilbert-Schmidt norm
}
\end{figure}

Two dimensional subspaces in $\mathbb C^3\ot\mathbb C^3$ form the  Grassmann manifold of
complex dimension $14$. Every two dimensional space has vectors of Schmidt ranks three and two.
Generic two dimensional subspaces $E$ satisfy $\# S^\prime(E,2)=3$ and $S^\prime(E,1)=\emptyset$.
As for the space $E_1$ which is a non-generic case, we have
$$
\# S^\prime(E_1,1)=1,\qquad
\# S^\prime(E_1,2)=0,\qquad
\# S^\prime(E_1,3)=\infty.
$$
Therefore, there exists no Schmidt number witness outside of the face $F_{E_1^\perp}$ in the left picture of
{\sc Figure 5}. We also see that the inclusion $\blockpos_1\supsetneqq\blockpos_2$ cannot be distinguished
around $\varrho_{E_1^\perp}$ by principal supporting hyperplanes, but
$\blockpos_2\supsetneqq\blockpos_3$ is locally distinguishable around $\varrho_{E_1^\perp}$ by principal
supporting hyperplanes.

On the other hand, we note that $E_2$ is a generic case and we have
$$
\# S^\prime(E_2,1)=0,\qquad
\# S^\prime(E_2,2)=3,\qquad
\# S^\prime(E_2,3)=\infty.
$$
First of all, $S^\prime(E_2,1)=\emptyset$ implies that there exist Schmidt number $2$ witnesses outside
of the face $F_{E_2^\perp}$.
There exist no Schmidt number $3$ witnesses outside of the face $F_{E_2^\perp}$ since
$S^\prime(E_2,2)$ is nonempty. But, the inclusion
$\blockpos_2\supsetneqq\blockpos_3$ is locally distinguishable around $\varrho_{E_2^\perp}$
by principal supporting hyperplanes since  $S^\prime(E_2,3)\neq\emptyset$, and so there exist Schmidt number $3$ witnesses
around $\varrho_{E_2^\perp}$.
\ $\Box$

%%%%%%%%%%%%%%%%%%%%%%%%%%%%%%%%%%%%%%%%%%%%%%%%%%%%%%%%%%%%%%%%%%%%%%%%%%%%%%%%%%%%%%%%%%%%%%%%%%%%%
%%%%%%%%%%%%%%%%%%%%%%%%%%%%%%%%%%%%%%%%%%%%%%%%%%%%%%%%%%%%%%%%%%%%%%%%%%%%%%%%%%%%%%%%%%%%%%%%%%%%%
%%%%%%%%%%%%%%%%%%%%%%%%%%%%%%%%%%%%%%%%%%%%%%%%%%%%%%%%%%%%%%%%%%%%%%%%%%%%%%%%%%%%%%%%%%%%%%%%%%%%%
%%%%%%%%%%%%%%%%%%%%%%%%%%%%%%%%%%%%%%%%%%%%%%%%%%%%%%%%%%%%%%%%%%%%%%%%%%%%%%%%%%%%%%%%%%%%%%%%%%%%%
%%%%%%%%%%%%%%%%%%%%%%%%%%%%%%%%%%%%%%%%%%%%%%%%%%%%%%%%%%%%%%%%%%%%%%%%%%%%%%%%%%%%%%%%%%%%%%%%%%%%%
%%%%%%%%%%%%%%%%%%%%%%%%%%%%%%%%%%%%%%%%%%%%%%%%%%%%%%%%%%%%%%%%%%%%%%%%%%%%%%%%%%%%%%%%%%%%%%%%%%%%%
%%%%%%%%%%%%%%%%%%%%%%%%%%%%%%%%%%%%%%%%%%%%%%%%%%%%%%%%%%%%%%%%%%%%%%%%%%%%%%%%%%%%%%%%%%%%%%%%%%%%%
%%%%%%%%%%%%%%%%%%%%%%%%%%%%%%%%%%%%%%%%%%%%%%%%%%%%%%%%%%%%%%%%%%%%%%%%%%%%%%%%%%%%%%%%%%%%%%%%%%%%%
%%%%%%%%%%%%%%%%%%%%%%%%%%%%%%%%%%%%%%%%%%%%%%%%%%%%%%%%%%%%%%%%%%%%%%%%%%%%%%%%%%%%%%%%%%%%%%%%%%%%%
\section{Conclusion and a question}

In this paper, we have considered local geometry around a projection state $\varrho_{E^\perp}$ onto the subspace
$E^\perp\subset\mathbb C^m\ot\mathbb C^n$, when the projection state $\varrho_{E^\perp}$ is on the boundary of the convex set $\blockpos_k$
of all $k$-blockpositive matrices. For this purpose, we introduced the notion of the principal supporting hyperplanes for the convex set
$\blockpos_k$ which determine the local shape of $\blockpos_k$ around $\varrho_{E^\perp}$. Such hyperplanes are
determined by vectors in $E$ with Schmidt rank at most $k$. Using techniques from algebraic geometry,
we have shown that there exist Schmidt number witnesses around $\varrho_{E^\perp}$ in generic cases.

For non-generic cases, we cannot show the existence of Schmidt number witnesses around $\varrho_{E^\perp}$
by the method in this paper. Nevertheless, we conjecture that the inclusion $\blockpos_{\ell}\supsetneqq\blockpos_{\ell+1}$
is locally distinguishable around $\varrho_{E^\perp}$ for $\ell\ge k$,
whenever $\varrho_{E^\perp}$ is on the boundary of the convex set $\blockpos_k$.
For this purpose, we may have to analyze non-linear geometry of the boundaries for $k$-blockpositivity at the point $\varrho_{E^\perp}$.

%%%%%%%%%%%%%%%%%%%%%%%%%%%%%%%%%%%%%%%%%%%%%%%%%%%%%%%%%%%%%%%%%%%%%%%%%%%%%%%%%%%%%%%%%%%%%%%%%%%%%
%%%%%%%%%%%%%%%%%%%%%%%%%%%%%%%%%%%%%%%%%%%%%%%%%%%%%%%%%%%%%%%%%%%%%%%%%%%%%%%%%%%%%%%%%%%%%%%%%%%%%
%%%%%%%%%%%%%%%%%%%%%%%%%%%%%%%%%%%%%%%%%%%%%%%%%%%%%%%%%%%%%%%%%%%%%%%%%%%%%%%%%%%%%%%%%%%%%%%%%%%%%
%%%%%%%%%%%%%%%%%%%%%%%%%%%%%%%%%%%%%%%%%%%%%%%%%%%%%%%%%%%%%%%%%%%%%%%%%%%%%%%%%%%%%%%%%%%%%%%%%%%%%
%%%%%%%%%%%%%%%%%%%%%%%%%%%%%%%%%%%%%%%%%%%%%%%%%%%%%%%%%%%%%%%%%%%%%%%%%%%%%%%%%%%%%%%%%%%%%%%%%%%%%
%%%%%%%%%%%%%%%%%%%%%%%%%%%%%%%%%%%%%%%%%%%%%%%%%%%%%%%%%%%%%%%%%%%%%%%%%%%%%%%%%%%%%%%%%%%%%%%%%%%%%
%%%%%%%%%%%%%%%%%%%%%%%%%%%%%%%%%%%%%%%%%%%%%%%%%%%%%%%%%%%%%%%%%%%%%%%%%%%%%%%%%%%%%%%%%%%%%%%%%%%%%
%%%%%%%%%%%%%%%%%%%%%%%%%%%%%%%%%%%%%%%%%%%%%%%%%%%%%%%%%%%%%%%%%%%%%%%%%%%%%%%%%%%%%%%%%%%%%%%%%%%%%
%%%%%%%%%%%%%%%%%%%%%%%%%%%%%%%%%%%%%%%%%%%%%%%%%%%%%%%%%%%%%%%%%%%%%%%%%%%%%%%%%%%%%%%%%%%%%%%%%%%%%

\end{document}